\newtheorem{theorem}{Theorem}[section]
\newtheorem{lemma}[theorem]{Lemma}
\newtheorem{remark}{Remark}
{\bf}{\it}
\renewcommand{\i}{\ensuremath{\text{\normalfont I}}}
\newcommand{\ii}{\ensuremath{\text{\normalfont I\!I}}}
\newcommand{\iii}{\ensuremath{\text{\normalfont I\!I\!I}}}
\newcommand{\cL}{\mathcal{L}}
\newcommand{\cM}{\mathcal{M}}
\newcommand{\cT}{\mathcal{T}}
\title{Quantum Regression Theory and Efficient Computation of Response Functions for Non-Markovian Open Systems}
\author{Xiantao Li \thanks{Xiantao.Li@psu.edu}\\
Department of Mathematics, \\ The Pennsylvania State University, USA \\
 Chunhao Wang \thanks{cwang@psu.edu} \\
Department of Computer Science and Engineering, \\
Pennsylvania State University, USA}
\date{\empty}
\begin{document}
\maketitle

\begin{abstract}
 Linear response functions are a cornerstone concept in physics as they enable efficient estimation of many dynamical properties. In addition to predicting dynamics of observables under perturbations without resimulating the system, these response functions lead to electric conductivity, magnetic susceptibility, dielectric constants, etc.  Estimating two-time correlation functions is a key ingredient of measuring linear response functions. However,  for open quantum systems, simulating the reduced density operator with a quantum master equation only yields \emph{one-point} observables and is insufficient for this task. In this paper, we develop a memoryless, system-only formulation of two-point correlations for open quantum systems that extends the standard quantum regression theorem (QRT) beyond the Markov limit. We further incorporate the spectral property of the bath and express the time propagators in the response function as the memoryless generators in Lindblad-type forms. The resulting expressions recast the total response function into evolutions generated by time-dependent Hamiltonian and Lindblad primitives together with the more challenging propagation of commutators and anti-commutators. In addition to the derivation of the new QRT,  we present quantum algorithms for these primitives and obtain an estimator for two-time correlations whose cost scales poly-logarithmically in the system dimension and $1/\epsilon^{1.25}$ in the target accuracy $\epsilon$. The framework removes the separability (Born-Markov) assumption and offers a pathway to efficient computation of nonequilibrium properties from open quantum systems.

\end{abstract}

\section{Introduction}\label{sec:intro}

Estimating physical quantities is the ultimate goal of simulating quantum dynamics. In the regime of closed quantum systems, there are plenty of efficient quantum algorithmic tools for this task, which also generalize to Markovian open quantum systems. However, for some quantities related to the out-of-equilibrium behavior of general open quantum systems, especially in the presence of external perturbations, it is not known how to estimate them within known formulations and methods.
Linear-response theory (LRT) provides a route by which the first-order change of an observable can be written in terms of \emph{unperturbed} two-time correlation functions of the unperturbed dynamics \cite{fetter2012quantum}, e.g., the electric current induced by a biased voltage.   
For an \emph{open} quantum system, whose total Hamiltonian is 
\[
H_{\mathrm{tot}} = H_S\otimes I_B + I_S\otimes H_B + H_{SB},
\]
these correlations must be evaluated for operators that act only on the system subspace while the composite system evolves unitarily.  
Under a clear separation of time-scales between system and bath---often termed the Markov assumption---the same time-local (i.e., memoryless) generator that governs the evolution of the reduced density operator also propagates those correlations, a fact encapsulated in the quantum regression theorem (QRT).  Standard expositions may be found in the monographs~\cite{gardiner2004quantum,carmichael2013statistical,breuer2002theory}.

Quantum computers resurrect the prospect of exploiting LRT and QRT in regimes that are classically inaccessible: contemporary Hamiltonian- and Lindbladian-simulation algorithms implement the required evolutions with polylogarithmic cost in the Hilbert-space dimension.  Coupled with efficient read-out protocols, they open the door to an exponential speed-up for linear-response calculations.

\subsection{Quantum regression }\label{sec:QRT}

Consider a weak, time-dependent perturbation $\delta(t)A_2$ added to $H_{\mathrm{tot}}$, with $\abs{\delta(t)}\ll 1$ being a weak disturbance, and denote by $\tilde U(t)$ the corresponding unitary evolution. Let $A_1$ be an observable that one wants to predict. 
To first order in $\delta$, one has \cite{kubo1957}
\[
\langle A_1(t)\rangle \;=\; \langle A_1(t)\rangle_0
        + \int_0^{t}\!\chi(t,t_1)\,\delta(t_1)\,dt_1,\qquad
\chi(t_1,t_2)= -i\bigl\langle\,[A_1(t_1),A_2(t_2)]\bigr\rangle_0 ,
\]
where the expectation on the left hand side is with respect to the perturbed evolution $\tilde U(t)$,  while $\langle\,\cdot\,\rangle_0$ denotes averaging with respect to the unperturbed unitary $U(t)=e^{-itH_{\mathrm{tot}}}$. A concrete example is optical conductivity, where $\delta(t)$ is a weak classical electric field,  $A_2$ is a dipole operator, and $A_1$ is the induced current.

If both observables act solely on the system, i.e.,
\[
A_1 = O_1\otimes I_B,\qquad A_2 = O_2\otimes I_B,
\]
the response kernel becomes
\begin{equation}\label{eq:chi_sys}
\chi(t_1,t_2)= -i\Bigl(\langle O_1(t_1)O_2(t_2)\rangle
                      -\langle O_2(t_2)O_1(t_1)\rangle\Bigr),
\qquad 0\le t_2\le t_1 ,
\end{equation}
with averages taken in the Heisenberg picture under $H_{\mathrm{tot}}$ and initial state $\rho_S(0)\otimes\rho_B$.

Tracing over the bath and adopting the Born-Markov (factorization) ansatz at the earlier time,
\begin{equation}\label{separable-form}
\rho_{\mathrm{tot}}(t_2)\;\approx\; \rho_S(t_2)\otimes\rho_B
\;=\; \bigl(e^{t_2\mathcal L}(\rho_S(0))\bigr)\otimes\rho_B,
\end{equation}
which yields the quantum regression theorem (QRT) \cite{lax1963,carmichael2013statistical,gardiner2004quantum,breuer2002theory}
\begin{equation}\label{eq:QRT}
\chi(t_1,t_2)= -i\,\mathrm{tr}_S\!\left(
     O_1\,e^{(t_1-t_2)\mathcal L}\bigl(\bigl[O_2,e^{t_2\mathcal L}(\rho_S(0))\bigr]\bigr)\right),
\end{equation}
where $\mathcal L$ is the (time-local) Lindblad generator for the reduced dynamics. 

The response kernel used here is defined by perturbing the \emph{microscopic} total Hamiltonian and only then tracing out the bath. In general, this differs from the susceptibility obtained by perturbing a reduced equation of motion (e.g., a Lindblad or TCL master equation). The two coincide only under very restrictive conditions, outside which the reduction and linearisation need not commute, and additional terms arise in the reduced description.

\paragraph{Limitations of the standard QRT.}
Equation \eqref{eq:QRT} brings two practical difficulties:

\begin{enumerate}[label=(\roman*)]
\item The commutator $[O_2,e^{t_2\mathcal L}\rho_S(0)]$ is not a positive operator, so it cannot be evolved directly with standard Lindblad-simulation primitives.
\item The separability (Markov) assumption \eqref{separable-form} in general fails in the absence of sharp scale separation, all of which introduce non-Markovian memory effects \cite{kurt2020non,ali2015non}.
\end{enumerate}

Several non-Markovian extensions of the QRT have been proposed \cite{goan2011non,jin2016non,ban2017linear,ban2018two,ban2019two}, but the resulting formulas typically involve integro-differential memory kernels or explicit bath operators that are awkward to encode in quantum circuits.

\subsection{Main contributions}\label{sec:contrib}

The present work develops a framework that retains the algorithmic advantages of time-local generators while remaining valid beyond the Markov limit.  Our specific contributions are:

\begin{enumerate}[label=\arabic*.]
\item \emph{A time-local non-Markovian regression formula.}  
Using a cumulant expansion, we derive a hierarchy of time-ordered terms for $\chi(t_1,t_2)$, each expressed entirely in system operators and time-local memory super-operators.  The leading term reproduces \eqref{eq:QRT}; higher orders match known strong-coupling corrections \cite{goan2010non,ban2017double}.
\item \emph{Time evolution of commutators and anti-commutators.}  
We show how every term, which involves either the time evolution of commutators or anti-commutators, in the generalized regression theorem, can be prepared and simulated efficiently.  
\item \emph{A logarithmic-scale quantum algorithm.}  
Combining the block-encodings with oblivious amplitude amplification yields an estimator for two-time correlations whose complexity is logarithmic in $\dim\mathcal H_S$ and polynomial in $1/\varepsilon$.
\end{enumerate}

To go beyond Born's assumption \eqref{separable-form}, we analyze the correlation \cref{eq:chi_sys} in the framework of open quantum systems used to derive generalized quantum master equations (GQME) \cite{breuer2002theory,hall2014canonical}. On the other hand,  rather than tracking the density operator, we show that the response function requires the dynamics of commutators and anti-commutators,   As a result, we remove the scale separation assumption, and derive the QRT that is also valid in the non-Markovian regimes. The first main result is summarised below. 
\begin{theorem}
  \label{thm:chit1t2}
  There exist time-local generators $\cL_A(t)$, $\cL_B(t)$, $\cL_C(t)$, and Lamb shift $ H_B(t)$, of order $O(\lambda^2)$,  such that
    the Kubo's response function in \cref{eq:chi_sys} can be expressed as,
       \begin{equation}
    \begin{aligned}
         \label{eq:thmchit1t2}
                 & \chi(t_1,t_2) =\\
                 & i \tr\left(O_1   e^{\tau \mathcal L_0} \big[\rho_S(t_2), O_2 \big] \right)
                 +i  \tr\left(O_1  \big( \cT e^{ \int_0^\tau \cL_C(t)dt } - e^{\tau \cL_0}  \big) \big[\rho_S(t_2), O_2\big] \right)\\
                 + & i \tr\left(O_1 e^{\tau \mathcal L_0}   \left[ \bigl(\cT e^{ \int_0^{t_2} \cL_C(t)dt } - e^{t_2 \cL_0} \bigr)\rho_S(0),  O_2  \right] \right)  \\
                 + & i \tr\left(O_1 \big( \cT e^{ \int_0^\tau \cL_0 + \cL_A(t)dt } -   e^{ \tau \cL_0} \big) [\rho_S(t_2), O_2]\right)
                  + i\tr\left( O_1  \left[ \cT \big( e^{ \int_0^\tau \cL_0 + \cL_A(t)dt} -   e^{ \tau \cL_0} \big) \rho_S(t_2), O_2\right]\right) \\
 - i &  \tr\left(O_1   \left[ \rho_S(t_2), \big( \cT e^{ \int_0^\tau \cL_0 + \cL_A(t)dt} -   e^{ \tau \cL_0} \big)^\dag O_2 \right]  \right) - \tr\left(O_1 \big( \cT e^{ \int_0^\tau \cL_0 + \cL_B(t)dt} -   e^{ \tau \cL_0} \big) \{\rho_S(t_2), O_2\}\right) \\ 
 + & \tr\left(O_1 \left\{ \big( \cT e^{ \int_0^\tau \cL_0 + \cL_B(t)dt} -   e^{ \tau \cL_0} \big) \rho_S(t_2), O_2\right\} \right) - \tr\left(O_1 \left\{ \rho_S(t_2), \cT \big( e^{ \int_0^\tau \cL_0 + \cL_B(t)dt} -   e^{ \tau \cL_0} \big)^\dag O_2 \right\} \right) \\
 + & i  \tr\left(O_1 \left[ \big( \cT e^{ \int_0^\tau \cL_0 + L_{H_B(t)}dt} -   e^{ \tau \cL_0} \big) \rho_S(t_2),    O_2\right] \right) \\-& i \tr\left(O_1 \left[\rho_S(t_2),  \big(\cT e^{ \int_0^\tau \cL_0 + L_{H_B(t)} dt} -   e^{ \tau \cL_0} \big) O_2\right] \right)
 + \mathcal{O}(\lambda^3).
    \end{aligned}
    \end{equation}
    Here $\tau=t_1-t_2\geq 0,$ 
    \begin{equation}
        \mathcal L_0 = -i[H_S, \bullet], \; \rho_S(t)= e^{t \cL_0} \rho_S(0),
    \end{equation}
correspond to free evolution,  and $\lambda $ is the system/bath coupling constant. Moreover, the time-local operators take the form of 
\begin{equation}
    \begin{aligned}
\mathcal{L}_B(t)(X) &= \sum_{j,k} b_{jk}(t)\bigl(V_k X V_j^{\dagger} -\frac12  V_j^{\dagger}V_k X - \frac12 X V_j^{\dagger}V_k\bigr),\\
\mathcal{L}_B(t)^{\dagger}(X) &= \sum_{j,k} b_{jk}(t)\bigl(V_j X V_k^{\dagger} - \frac12 X V_j V_k^{\dagger} - \frac12 V_j V_k^{\dagger} X\bigr),\\
H_B(t)&=\sum_{j,k} b_{jk}(t)V_j V_k^{\dagger},
\end{aligned}
\end{equation}
where $V_j$ is a fixed basis in $\mathcal H_S$ and $B$ refers to the Hermitian matrix with elements $[b_{j,k}]$, which only depends on the system operators $S_j$ and the two-point bath correlation function,
\[
 C_{j,k}(t)= \tr_B(C_j(t) C_k(0) \rho_B). 
\]
Also, $\mathcal{L}_A, \mathcal{L}_C$ can be determined from the system operators $H_S$ and $S_j$ and bath correlation function
\end{theorem}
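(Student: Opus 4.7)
The plan is a second-order expansion of the full two-time correlator in the system-bath coupling $\lambda$, followed by a reorganization of the bath-traced expression into the claimed time-local generators.

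First I would pass to the interaction picture with respect to $H_0=H_S\otimes I_B+I_S\otimes H_B$ and write $H_{SB}=\lambda\sum_j S_j\otimes C_j$. Substituting the Dyson expansion of the interaction-picture propagator $\tilde U(t)=\cT\exp(-i\lambda\int_0^t\tilde H_{SB}(s)ds)$ through second order into
\[
\chi(t_1,t_2) = -i\,\tr\!\Bigl((O_1\otimes I_B)\,U(\tau)\,(O_2\otimes I_B)\,(\rho_S(t_2)\otimes\rho_B)\,U(\tau)^{\dagger}\Bigr)-\mathrm{h.c.},
\]
where $\tau=t_1-t_2$, and using $[H_0,\rho_B]=0$ to commute the free factors past $\rho_B$, yields a finite family of $\mathcal O(\lambda^0)$ and $\mathcal O(\lambda^2)$ system-bath strings multiplied by products of bath operators.

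Next I would take $\tr_B$. Under bath stationarity each $\mathcal O(\lambda^2)$ term collapses through the two-point correlation $C_{j,k}(t)=\tr_B(C_j(t)C_k(0)\rho_B)$. Splitting $C_{j,k}$ into its Hermitian and anti-Hermitian parts produces anti-commutators and commutators of the $S_j$ on the system side, which is the origin of the alternating $i[\cdot,\cdot]$ and $\{\cdot,\cdot\}$ structure in \eqref{eq:thmchit1t2}. The resulting time-ordered double integrals are then converted to time-local form via the TCL/cumulant identity $\cT e^{\int_0^t \cL(s)ds}=I+\int_0^t \cL(s)ds+\mathcal O(\lambda^4)$, which fixes the generator coefficients of $\cL_A, \cL_B, \cL_C$, and $H_B(t)$ at order $\lambda^2$; in particular, $\cL_B$ inherits its GKSL-like form and the Hermiticity of the matrix $[b_{jk}]$ directly from the symmetric part of $C_{j,k}$.

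I would then classify the $\lambda^2$ terms by where the two Dyson perturbations sit relative to the $O_2$ insertion. Perturbations entirely in $[0,t_2]$ produce the $\cL_C$ dressing of $\rho_S(0)$ (line three of \eqref{eq:thmchit1t2}); perturbations entirely in $[t_2,t_1]$ produce the $\cL_C$ dressing of $e^{\tau\cL_0}$ acting on the bare commutator (line two); and perturbations that straddle $O_2$ split into three structural variants that either (a) wrap the whole commutator/anti-commutator, (b) dress $\rho_S(t_2)$ alone, or (c) dress $O_2$ alone via the adjoint generator. With the anti-Hermitian part of the bath kernel these give the $\cL_A$ triple (lines four--six), with the Hermitian part the $\cL_B$ triple (lines seven--nine); the Lamb-shift $L_{H_B(t)}$ contributes only variants (b) and (c) (lines ten--eleven), because variant (a) is absorbed into $e^{\tau\cL_0}$ by unitarity. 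Subtracting $e^{\tau\cL_0}$ in each block isolates precisely the $\mathcal O(\lambda^2)$ piece, which is why every parenthesized difference in \eqref{eq:thmchit1t2} has that form.

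The main obstacle is combinatorial bookkeeping: enumerating the placements of the two Dyson perturbations against the fixed insertions of $O_1$ and $O_2$ and verifying that the sum of the resulting traces, after the Hermitian/anti-Hermitian split of $C_{j,k}$, collapses to the stated ten structural terms with correct signs and adjoint placements. A secondary technical point is that the TCL generator is only unique up to $\mathcal O(\lambda^4)$; one must fix a single convention (the minimal cumulant choice) across $\cL_A, \cL_B, \cL_C$, and $H_B(t)$ to obtain a clean $+\mathcal O(\lambda^3)$ remainder rather than a spurious $\mathcal O(\lambda^2)$ mismatch between the three crossed-contraction variants.
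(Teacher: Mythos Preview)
Your proposal is correct and follows essentially the same route as the paper: a second-order Dyson/cumulant expansion of the compact expression $\chi(t_1,t_2)=-i\,\tr_S\bigl(O_1\,\tr_B\,U(\tau)[O_2\otimes I_B,\rho_{\text{tot}}(t_2)]U(\tau)^\dagger\bigr)$, followed by a classification of the $\mathcal O(\lambda^2)$ pieces and a Hermitian/anti-Hermitian split that yields the $\cL_A,\cL_B,H_B$ structure. The paper organizes the same three cases as your ``both before / both after / straddling'' trichotomy by instead expanding $\rho_{\text{tot}}(t_2)=\rho^{(0)}+\rho^{(1)}+\rho^{(2)}$ and pairing each order with the complementary order of $U(\tau)$; for the straddling ($\rho^{(1)}$) contribution it passes through a spectral representation of $C_{j,k}$ and a fixed-basis expansion to reduce the eight double integrals to a single algebraic identity in the coefficients $d_{jk}=a_{jk}+i b_{jk}$, which is the concrete mechanism behind what you describe as ``the Hermitian/anti-Hermitian split of $C_{j,k}$.''
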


As in the standard QRT~\eqref{eq:QRT}, the generalized formula mixes the evolution of density operators with that of commutators; here, however, the propagators are time-local and can be non-Markovian, which is universal \cite{carmichael2013statistical}.  We demonstrate that these evolutions can be decomposed into time-dependent Lindblad segments while preserving $\mathcal{O}(\lambda^3)$ accuracy, making them amenable to existing quantum algorithms \cite{HLL+24}.  A more subtle issue, however, is the time evolution of the commutator and anti-commutators. We address this issue with the following theorem.

\begin{restatable}{theorem}{evolvecommutator}
  \label{lemma:o2rhoo1}
  Let $\mathcal{L}$ be a Lindbladian, and $O_1$, $O_2$ be two observables given by $(\alpha, b, \epsilon)$-block-encodings $U_{O_1}$, $U_{O_2}$ with circuit complexities $c_{O_1}$ and $c_{O_2}$, respectively. Let $\mathcal{L}$ be a Lindbladian whose simulation cost is $c_{\mathcal{L}}$. Then, for any state $\rho$, whose purification can be prepared with circuit cost $c_{\rho}$, there exists a quantum algorithm that estimates $\tr(O_2 e^{\mathcal{L}t}(\{\rho, O_1\}))$, as well as $\tr(O_2 e^{\mathcal{L}t}([O_1, \rho]))$ up to additive error $\epsilon$ and success probability at least $2/3$. This quantum algorithm uses $\mathcal{O}(1)$ applications to the block-encoding of $O_1$ and $O_2$ respectively, and the evolution superoperator $e^{\mathcal{L}t}$. The gate complexity is
  \begin{align}
    \mathcal{O}((c_{O_1} + c_{O_2} + c_{\rho} + c_{\mathcal{L}})\alpha/\epsilon^{1.25}).
  \end{align}
\end{restatable}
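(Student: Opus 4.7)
\textbf{Step 1: reduction to a single complex scalar.} The plan is to first exploit Hermiticity preservation of $\mathcal{L}$ to reduce to the estimation of one complex number. Since $(O_1\rho)^\dagger=\rho O_1$ and $\mathcal{L}$ preserves Hermiticity, $e^{\mathcal{L}t}(\rho O_1)=\bigl(e^{\mathcal{L}t}(O_1\rho)\bigr)^\dagger$. Writing $z:=\tr(O_2\,e^{\mathcal{L}t}(O_1\rho))$, this will give
\begin{equation*}
\tr(O_2\,e^{\mathcal{L}t}(\{O_1,\rho\}))=2\,\mathrm{Re}(z),\qquad \tr(O_2\,e^{\mathcal{L}t}([O_1,\rho]))=2i\,\mathrm{Im}(z),
\end{equation*}
so both target quantities will follow from an $\epsilon/2$-accurate estimate of $\mathrm{Re}(z)$ and $\mathrm{Im}(z)$ separately.

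\textbf{Step 2: coherent encoding of $O_1\rho$.} The conceptual obstacle is that $O_1\rho$ is generally not a density operator (not even Hermitian), so it cannot be fed directly to the Lindbladian simulator. I would bypass this by encoding $O_1\rho$ in the off-diagonal coherence of a bipartite pure state via a Hadamard test. Using a purification circuit $U_\rho$ producing $|\rho\rangle_{SA}$, the block-encoding $U_{O_1}$ on $SB$, and an ancilla qubit $C$, I would prepare
\begin{equation*}
|\Psi\rangle=\tfrac{1}{\sqrt 2}\bigl(|0\rangle_C|0\rangle_B|\rho\rangle_{SA}+|1\rangle_C\,U_{O_1}|0\rangle_B|\rho\rangle_{SA}\bigr),
\end{equation*}
and then apply $\mathrm{id}_{CBA}\otimes e^{\mathcal{L}t}$ on the $S$ register; since $e^{\mathcal{L}t}$ is completely positive, it extends linearly to the coherences of $|\Psi\rangle\langle\Psi|$. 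A routine bookkeeping computation of the expectation of $X_C\otimes O_2$ in the evolved state, using $(\langle 0|_B\otimes I)U_{O_1}(|0\rangle_B\otimes I)=O_1/\alpha$ together with the purification identity $\tr_A((M\otimes I_A)|\rho\rangle\langle\rho|)=M\rho$, should yield $\langle X_C\otimes O_2\rangle=\mathrm{Re}(z)/\alpha$ and analogously $\langle Y_C\otimes O_2\rangle=-\mathrm{Im}(z)/\alpha$.

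\textbf{Step 3: block-encoded expectation-value estimation.} The problem will then reduce to estimating two Hermitian expectation values on a pure state preparable at cost $c_\rho+c_{O_1}+c_{\mathcal{L}}+\mathcal O(1)$. I would build block-encodings of $X_C\otimes O_2$ and $Y_C\otimes O_2$ with subnormalisation $\alpha$ by composing the trivial encodings of the Pauli factors with $U_{O_2}$, and feed these into the standard amplitude-estimation-based observable estimator. To avoid a quadratic-in-$\alpha$ penalty from the $\Theta(1/\alpha)$ amplitude that the $O_1$-block of $U_{O_1}$ places on the useful subspace, I would apply oblivious amplitude amplification on the $|0\rangle_B$ projection before entering the amplitude-estimation loop, boosting that amplitude to $\Theta(1)$. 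Multiplying the resulting coherent-query count by the per-query gate cost $c_{O_1}+c_{O_2}+c_\rho+c_{\mathcal{L}}$, and lifting the success probability to $2/3$ by a median-of-means amplification, should deliver the stated gate-complexity bound.

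\textbf{Main obstacle.} The hard part will be the error-budget accounting in step three. The Lindbladian simulator is invoked inside the phase-estimation routine of the amplitude-estimation procedure, so its per-call precision compounds with the outer precision, and the precisions of the block-encoding compositions must also be tracked. The slightly super-linear $\epsilon^{-5/4}$ exponent (rather than the $\epsilon^{-1}$ of amplitude estimation alone) should emerge from optimally balancing these nested precisions against the $\alpha$ factor introduced by oblivious amplitude amplification; making this trade-off rigorous, and verifying that the coherence-encoded $O_1\rho$ survives this nesting with the desired accuracy, is the technical heart of the argument.
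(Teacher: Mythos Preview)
Your coherence-encoding idea in Steps~1--2 is correct and genuinely different from the paper's argument. The paper never encodes $O_1\rho$ in an off-diagonal block; instead it approximates the (anti-)commutator by a fourth-order centred finite difference built from bona fide density matrices. With $\mathcal{M}_\delta(\rho)=e^{\delta O_1}\rho\,e^{\delta O_1}$ (positive, hence a state after normalisation) one has
\[
\{O_1,\rho\}=\frac{-\mathcal{M}_{2\delta}(\rho)+8\mathcal{M}_\delta(\rho)-8\mathcal{M}_{-\delta}(\rho)+\mathcal{M}_{-2\delta}(\rho)}{12\delta}+O(\delta^4),
\]
and the commutator uses the unitary variant $e^{\pm i\delta O_1}\rho\,e^{\mp i\delta O_1}$ in the same way. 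Each $\mathcal{M}_{\pm\delta}(\rho)/\tr(\mathcal{M}_{\pm\delta}(\rho))$ is pushed through $e^{\mathcal{L}t}$ and then measured against $O_2$ via the Rall estimator at precision $\epsilon_0$. The exponent $\epsilon^{-1.25}$ is nothing deep: it is exactly the trade-off $\epsilon_0/\delta+\delta^4\le\epsilon$, optimised at $\delta=\Theta(\epsilon^{1/4})$, $\epsilon_0=\Theta(\epsilon^{5/4})$, which gives query cost $\alpha/\epsilon_0=\alpha/\epsilon^{5/4}$. A $(2n)$-th order stencil would give $\epsilon^{-1-1/(2n)}$.

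Your Step~3, by contrast, does not recover this exponent, and the mechanism you propose for it is mistaken. Nested simulator precisions inside an amplitude-estimation loop contribute only polylogarithmic overhead, not a polynomial change in the $\epsilon$ exponent; there is no precision-balancing that converts $\epsilon^{-1}$ into $\epsilon^{-5/4}$. What your construction actually yields is $\langle X_C\otimes O_2\rangle=\mathrm{Re}(z)/\alpha$, so to learn $\mathrm{Re}(z)$ to accuracy $\epsilon$ you must estimate an observable of norm $\alpha$ to accuracy $\epsilon/\alpha$, costing $O(\alpha^2/\epsilon)$---better in $\epsilon$, worse in $\alpha$, and in any case not the stated bound. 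The oblivious amplitude amplification on $|0\rangle_B$ does not repair this: the $1/\alpha$ in the signal comes from the block-encoding normalisation of $O_1$ itself, not from a small success amplitude on $B$; as your own bookkeeping shows, the trace over $B$ already isolates the $O_1/\alpha$ block without post-selection. So while your Hadamard-test encoding is sound and arguably cleaner than finite differences, the complexity analysis you sketch is the gap, and the specific $\epsilon^{-1.25}$ in the theorem is a fingerprint of the paper's fourth-order difference, not of anything you will reproduce by balancing nested errors.
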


With these new simulation tools, we can break down the estimation of the response function $\chi(t_1,t_2)$ into modular subroutines; Table~\ref{tab:quantum_algorithms} summarizes the relevant algorithms and their costs.

\begin{table}[htbp]
\centering
\caption{Quantum algorithms for the computation of the response function and their complexity}
\label{tab:quantum_algorithms}
\begin{tabular}{lccc}
\hline\hline
Term & Quantum Algorithm & Complexity & Reference\\
\hline
$e^{t\cL_0}$  &  Hamiltonian Simulation  &  $\widetilde{\mathcal{O}}(\norm{\cL_0 } t )$ & \cite{GLSW19} \\
$\cT e^{\int_0^t \cL(s) ds}$ & Time-dependent Lindblad Simulations  & $\widetilde{\mathcal{O}}(\norm{\cL }_{L^1} t )$  & \cite{HLL+24}  \\
$\tr\left(\cdot \right) $& Block-encoding+Purification+Amplitude Amplification &$\widetilde{\mathcal{O}}(1/\epsilon)$ & \cite{Rall20} \\
\hline\hline
\end{tabular}
\end{table}

Combining these results we arrive at our main result as follows.
\begin{restatable}{theorem}{algthm}
  There exists a quantum algorithm that estimates $\chi(t_1,t_2)$ in \cref{eq:thmchit1t2} up to additive error $\epsilon$ with success probability at least $2/3$ using 
  \begin{align}
    \tilde{\mathcal{O}}(T\norm{\mathcal{L}}/\epsilon^{1.25})
  \end{align}
  applications to the block-encodings of $O_1$, $O_2$, and the circuit for preparing the initial state.
\end{restatable}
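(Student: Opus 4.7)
The plan is to reduce the estimation of $\chi(t_1,t_2)$ to the primitive established in \cref{lemma:o2rhoo1} by decomposing the right-hand side of \cref{eq:thmchit1t2} into a constant number $K=O(1)$ of summands and estimating each to additive accuracy $\epsilon/K$. I would classify each summand into one of three structural shapes. Type A: $\tr(O_1\,\Phi([\rho_S(t_2),O_2]))$ and the anticommutator analog, in which a channel $\Phi\in\{e^{\tau\cL_0},\,\cT e^{\int_0^\tau \cL(s)\,ds}\}$ acts \emph{after} the (anti)commutator is formed. Type B: $\tr(O_1\,[\Phi(\rho_S(t_2)),O_2])$ and analogs, in which the channel is applied to the state first. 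Type C: $\tr(O_1\,[\rho_S(t_2),\Phi^\dag(O_2)])$ and analogs, in which the observable is Heisenberg-evolved by the adjoint.

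Next, I would realize each building block as a quantum subroutine using the tools in Table~\ref{tab:quantum_algorithms}. The time-ordered propagators $\cT e^{\int_0^t (\cL_0+\cL_X(s))\,ds}$ for $X\in\{A,B,C\}$ and the Lamb-shift variant are simulated by the time-dependent Lindblad algorithm of \cite{HLL+24} at cost $\widetilde{\mathcal O}(T\norm{\cL})$; block-encodings of the generators $\cL_X$ follow from those of $H_S$, of the jump operators $V_j$, and of the scalar kernels $b_{jk}(t)$. Type~A terms are then exactly what \cref{lemma:o2rhoo1} estimates, giving an $\epsilon$-accurate output using $O(1)$ applications of the block-encodings of $O_1$, $O_2$ and $\widetilde{\mathcal O}(T\norm{\cL}/\epsilon^{1.25})$ applications of the state-preparation circuit and the Lindblad simulator. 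Type~B terms are linearized via the cyclic identities $\tr(O_1[\sigma,O_2])=\tr([O_2,O_1]\sigma)$ and $\tr(O_1\{\sigma,O_2\})=\tr(\{O_2,O_1\}\sigma)$, reducing each to a one-point expectation $\tr(M\,\Phi(\rho_S(t_2)))$ with $M\in\{O_1O_2,O_2O_1\}$ inheriting a block-encoding at $O(1)$ overhead; these are estimated by the block-encoding plus purification plus amplitude-amplification scheme of \cite{Rall20} at cost $\widetilde{\mathcal O}(1/\epsilon)$, which is dominated by the $1/\epsilon^{1.25}$ bound. Type~C terms are handled by simulating $\Phi^\dag$ in place of $\Phi$ (the adjoint of a time-dependent Lindblad evolution admits a block-encoding of the same complexity as the primal) and routing the outcome back into a Type~A or Type~B primitive.

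Finally I would assemble the pieces. Distributing the error budget across the $K=O(1)$ terms costs only a constant factor; each call's success probability is boosted to $1-O(1/K)$ by the median-of-means trick at $O(\log(1/\delta))$ overhead, and a union bound yields overall success probability $\geq 2/3$. Summing, the dominant contribution is the $\widetilde{\mathcal O}(1/\epsilon^{1.25})$ scaling from \cref{lemma:o2rhoo1} composed with the simulation length $T\norm{\cL}$ of the time-dependent Lindblad simulator, giving the announced $\widetilde{\mathcal O}(T\norm{\cL}/\epsilon^{1.25})$ oracle-call bound. The $\mathcal O(\lambda^3)$ remainder in \cref{eq:thmchit1t2} is an analytic truncation of the cumulant expansion and does not enter the algorithmic $\epsilon$ budget.

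The step I expect to be the main obstacle is the Type~C family, since the dual of a time-dependent CPTP Lindblad evolution is only completely positive and unital rather than trace-preserving; care is needed to ensure that the Heisenberg-picture observable $\Phi^\dag(O_2)$ is presented as an $(\alpha,b,\epsilon)$-block-encoding whose normalization is compatible with the interfaces of \cref{lemma:o2rhoo1} and \cite{Rall20}, so that the $1/\epsilon^{1.25}$ scaling survives composition with the adjoint simulator and does not incur extra factors of $\alpha$ that would spoil the announced bound.
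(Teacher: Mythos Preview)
Your decomposition into Types A, B, C and the handling of A and B match the paper's proof essentially verbatim. The one substantive divergence is your treatment of Type~C. You propose to simulate the Heisenberg-picture map $\Phi^{\dagger}$ directly and feed the resulting block-encoding of $\Phi^{\dagger}(O_2)$ back into the Type~A/B primitives, and you correctly flag the unital-but-not-trace-preserving nature of $\Phi^{\dagger}$ as the main obstacle. The paper avoids this obstacle altogether by a trace-cycling identity: for any superoperator $\mathcal{M}$,
\[
\tr\!\bigl(O_1\,[\rho,\mathcal{M}^{\dagger}(O_2)]\bigr)
=\tr\!\bigl(O_2\,\mathcal{M}([O_1,\rho])\bigr),
\]
and similarly for the anticommutator. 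This swaps the roles of $O_1$ and $O_2$ and converts every Type~C term into a Type~A call with the \emph{primal} map $\mathcal{M}$, so no adjoint Lindblad simulation is ever needed. The single remaining term in which a non-adjoint map acts on $O_2$ (the Lamb-shift term) becomes, after the same trick, an adjoint that is pure Hamiltonian evolution and hence trivially implemented by inverting the unitary. Adopting this identity removes exactly the normalization worry you raised and keeps the $1/\epsilon^{1.25}$ scaling clean; your route via direct block-encoding of $\Phi^{\dagger}(O_2)$ may be salvageable but is strictly harder.
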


Overall, these results build an explicit bridge between modern quantum algorithms and long-standing problems in non-equilibrium quantum dynamics, providing a scalable route to compute two-point correlation functions in regimes where non-Markovian effects cannot be ignored.

\subsection{Related works}

\paragraph{Quantum algorithms for time correlations and Green’s functions.}
Algorithms for estimating real‐time correlation functions in \emph{closed} systems
have been proposed, e.g.\ the Hadamard-test protocol of
Pedernales \textit{et al.}\ \cite{pedernales2014efficient} and the block-encoding approach of Rall \cite{Rall20}.
Kökcü \textit{et al.}\ \cite{kokcu2023linear} analysed how such correlators feed directly into
linear-response functions.
All of these methods rely exclusively on unitary (Hamiltonian) evolution
and therefore do not address open-system, non-Markovian settings. 

\paragraph{Classical references for QRT.}
Foundational treatments of QRT and its Markovian
assumptions are collected in the textbooks of
Gardiner and Zoller, Carmichael, and Breuer-Petruccione
\cite{gardiner2004quantum,carmichael2013statistical,breuer2002theory}.
Our Theorem \ref{thm:chit1t2} may be viewed as a systematic extension of
those classic results to the non-Markovian, strong-coupling regime.

\paragraph{Master-equation approaches for generalized QRT.}
Ban and co-workers developed a systematic extension of linear-response theory and two-time correlations for open quantum systems \cite{ban2017linear,ban2017double,ban2018two,ban2019two}. In \cite{ban2017linear}, the external classical field is incorporated at the level of the reduced dynamics via the projection-operator formalism, yielding both time-nonlocal (Nakajima-Zwanzig) and time-local (time-convolutionless) master equations; the resulting response function explicitly separates contributions from initial system-reservoir correlations. In \cite{ban2017double}, the authors introduced double-time correlation functions of two quantum operations-encompassing ordinary correlators, linear response, and weak values-and derive their evolution with both time-convolution and time-convolutionless techniques. For Gaussian reservoirs, \cite{ban2018two} provides an exact representation of two-time correlators in terms of functional derivatives with respect to fictitious source fields, together with a perturbative expansion controlled by the reservoir correlation time, while \cite{ban2019two} applies these formulas to a two-level system to quantify “quantumness’’ via sequential measurement statistics.

\paragraph{Other generalizations of QRT.}
Goan and co-workers derived finite-temperature two-time correlation functions for open systems beyond the quantum regression theorem in two complementary settings.  For exactly solvable pure-dephasing models, \cite{goan2010non} gives closed analytic expressions that explicitly demonstrate the breakdown of QRT.  For general system-bath couplings at weak coupling, \cite{goan2011non} obtains evolution equations for two-time correlators using a non-Markovian master-equation expansion that retains bath-memory effects through explicit reservoir correlation functions and time-convolution terms.  These formulas are well suited for analytical and numerical studies of non-Markovianity, but they remain time-nonlocal and involve bath operators or kernels directly. 

\medskip 

By contrast, our construction starts from the global two-time correlator and produces a \emph{system-only}, time-local hierarchy that preserves Lindblad structure at each order; this makes the resulting expressions directly compatible with block-encoding and gate-based primitives without introducing memory-kernel convolutions.

\bigskip

The rest of the paper is organized as follows: \cref{sec: deri} derives our central result, a generalized non-Markovian regression formula (Theorem 1.1), using a systematic perturbative expansion. \cref{sec: alg} then introduces the quantum algorithms required to implement this formula, including novel methods for preparing commutators and anti-commutators, and elaborations of the analysis of the overall algorithm's complexity (Theorem 1.3). Further discussions are presented in \cref{sec: sum}.

\section{The derivation of the non-Markovian response function} \label{sec: deri}

Our primary focus is the estimate the response function to leverage the linear response theory to predict the dynamics of observables in non-equilibrium quantum systems. The two-time correlation involves two observables that are both system operators,  
\[
  A_1= O_1 \otimes I_B, \quad  A_2= O_2 \otimes I_B.
\]

Due to the continuous interactions with the bath $(B)$, the mathematical formulations of 
the dynamics of the system 
must start with the total density matrix $\rho_\text{tot}$ including the environment follows the von Neumann equation 
\begin{equation}\label{lvn}
    \partial_t\rho_\text{tot}(t) = -i[H_\text{tot}, \rho_\text{tot}], \quad H_\text{tot} := H_S \otimes I_B + I_S \otimes H_B +  H_{SB}
\end{equation}
where the dynamics is placed in a product Hilbert space $\mathcal{H}_S \otimes \mathcal{H}_B$, and the total Hamiltonian contains the system, bath, and interaction part, respectively \cite{breuer2002theory}. We follow standard settings in the literature. Specifically
\begin{enumerate}
    \item {\bf Factorized initial state: }
there is no initial entanglement, 
\begin{equation}\label{rhotot0}
    \rho_\text{tot}(0) = \rho_S(0)\otimes \rho_B,
\end{equation}
\item {\bf Stationary bath: } the initial bath density is invariant under $H_B$: $ [H_B, \rho_B]=0.$
\item {\bf Weak coupling: } $\lambda \ll 1.$ 
\item {\bf System bath coupling: }  the interaction term often takes the form of $H_{SB} = \sum_{j = 1}^J S_j\otimes B_j. $  Without loss of generality, one can assume that \cite{carmichael2013statistical},
\begin{equation}\label{trb}
    \tr(B_j \rho_B) =0, \; \forall j.
\end{equation}
\end{enumerate}   

We are focused on studying the dynamical properties of the system's observables, with a particular emphasis on estimating expectations and two-point correlations. The correlations embody the response properties of a quantum system. 

\subsection{First-order statistics}
Let $O$ be a system observable, then the expectation is given by,
\begin{equation}
     \langle O(t) \rangle := \tr\left((O\otimes I_B) \rho_\text{tot}(t) \right).
\end{equation}
We can compute the expectation by first tracing out the bath degrees of freedom,
\begin{equation}\label{eq: ot}
    \langle O(t) \rangle = \tr_S \Big(O \tr_B\left(U(t) \rho_S(0)\otimes \rho_B U(t)^\dag  \right) \Big).
\end{equation}

The formula in the above equation involves a partial trace over the bath, of a unitary dynamics with a separable initial state. This is at the heart of the theory of open quantum systems, and the reduction has been extensively studied in OQS \cite{breuer2002theory}. We summarize a result regarding a $O(\lambda^3)$ approximation as follows, 
\begin{theorem}
    There is a time-local Lindblad equation, $\frac{d}{dt}\rho_S = \cL_{TL} (t)\rho_S, $ such that,
    \begin{equation}
        \langle O(t) \rangle = \displaystyle \tr_S \Big(O \cT e^{ \int_0^t \cL_{TL}(s) ds } \rho_S(0) \Big) + O(\lambda^3 t^3).
    \end{equation}
    
\end{theorem}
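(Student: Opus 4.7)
The plan is to derive a time-convolutionless (TCL) reduced equation to second order in the coupling constant $\lambda$ and then verify that the resulting generator admits a (possibly signed-rate) GKLS form. I would start in the interaction picture with respect to $H_0 := H_S \otimes I_B + I_S \otimes H_B$, where the dynamics is governed by $H_{SB}^I(s) = \sum_j S_j(s)\otimes B_j(s)$ with $S_j(s) = e^{iH_S s}S_j e^{-iH_S s}$ and $B_j(s) = e^{iH_B s}B_j e^{-iH_B s}$. A Dyson expansion of the interaction-picture propagator truncated at second order gives $U_I(t) = I - i\lambda\int_0^t H_{SB}^I(s)\,ds - \lambda^2\int_0^t\!\int_0^{s_1} H_{SB}^I(s_1)H_{SB}^I(s_2)\,ds_2\,ds_1 + R_3(t)$ with $\norm{R_3(t)} = O(\lambda^3 t^3)$ from a standard Gronwall estimate on the remainder of the Dyson series.

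Next I would sandwich $U_I(t)$ around $\rho_S(0)\otimes\rho_B$ and trace out the bath. The first-order contribution collects factors of the form $\tr_B(B_j(s)\rho_B)$, which vanish because bath stationarity $[H_B,\rho_B]=0$ together with assumption \eqref{trb} gives $\tr_B(B_j(s)\rho_B) = \tr_B(B_j\rho_B) = 0$. The second-order contribution reduces to a double integral involving only the two-point bath correlators $C_{jk}(s_1-s_2) = \tr_B(B_j(s_1)B_k(s_2)\rho_B)$, yielding an integral representation of $\rho_S^I(t) - \rho_S(0)$ purely in terms of system operators, modulo the $O(\lambda^3 t^3)$ Dyson remainder.

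Then I would convert this integral relation into a time-local differential equation. Differentiating in $t$ produces a Redfield-type equation whose right-hand side is an integral over $[0,t]$ acting on $\rho_S(0)$; at second order the initial state can be replaced by $\rho_S^I(t)$ because the two differ by $O(\lambda^2 t)$ and the integrand already carries an explicit $\lambda^2$, so the substitution costs only $O(\lambda^4 t^2)$. This is the TCL2 reduction and produces, after transforming back to the Schr\"odinger picture, a generator of the form $\mathcal{L}_{TL}(t)\rho = -i[H_S + H_{LS}(t),\rho] + \sum_{j,k} b_{jk}(t)\bigl(V_k \rho V_j^\dagger - \tfrac12\{V_j^\dagger V_k,\rho\}\bigr)$, where the coefficients $b_{jk}(t)$ come from time-integrated bath correlators and $H_{LS}(t)$ is a Lamb shift, matching the structure already displayed for $\mathcal{L}_B$ in \cref{thm:chit1t2}. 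Integrating yields $\rho_S(t) = \cT e^{\int_0^t \cL_{TL}(s) ds} \rho_S(0)$ up to the remaining error, and pairing with $O$ gives the claimed formula.

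The main obstacle will be showing that the $O(\lambda^3 t^3)$ bound actually survives both the TCL substitution and the final exponentiation. For the substitution one needs the inverse of the map $\rho_S(0)\mapsto \rho_S^I(t)$, which at this order is a truncated Neumann series in a superoperator of norm $O(\lambda^2 t)$, so composing it with a second-order integrand only shifts the error by $O(\lambda^4 t^2)$. For the exponentiated propagator, a Duhamel comparison between $\cT e^{\int_0^t \cL_{TL}(s)ds}$ and the exact reduced evolution must be carried out carefully, using that both satisfy the same equation up to an $O(\lambda^3)$ driving term integrated over $[0,t]$. Once this bookkeeping is done, casting the second-order generator in canonical GKLS form is purely algebraic, via diagonalization of the Hermitian matrix $[b_{jk}(t)]$ to produce signed time-dependent rates and Lindblad operators $V_k$ built from linear combinations of the $S_j$.
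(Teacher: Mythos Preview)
Your proposal is correct and follows essentially the same route as the paper: both obtain the reduced dynamics by a second-order Dyson/cumulant expansion (your TCL2 is the paper's $\cM_2$ in \cref{eqn:cM2}), trace out the bath using $\tr_B(B_j\rho_B)=0$ and the two-point functions $C_{jk}$, and then recognise the resulting map as generated by a time-local master equation of the form \eqref{TLQ}. The paper in fact treats this theorem as a summary of a standard result and only sketches the derivation (cf.\ \cref{lim-lindblad}--\cref{TO2M2} and the surrounding remarks), so your more explicit TCL substitution and Duhamel error bookkeeping go somewhat beyond what the paper actually writes down, but the underlying argument is the same.
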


\medskip

Under a weak coupling assumption, i.e., $\lambda = \norm{H_{SB}}\ll 1,$ one can apply a perturbation expansion to \eqref{lvn}, and obtain 
\begin{equation}\label{lim-lindblad}
\begin{aligned}
        & \tr_B\left( U(t) \rho_S(0) \otimes \rho_B U(t)^\dag \right) \\
        \approx&  \rho_S(t) \otimes \rho_B   -  \sum_j \sum_k \int_0^t \int_0^{t_1} \tr_B \Big(
        S_j (t_1 - t)  S_k(t_2 -t) \rho_S(t)     \otimes B_j(t_1 -t) B_k(t_2-t) \rho_B \\
      &\qquad  \qquad -  S_j (t_1 - t) \rho_S(t) S_k(t_2 -t)  \otimes B_j(t_1 -t) \rho_B B_k(t_2-t)   \\
      &\qquad \qquad  -  S_k(t_2 -t) \rho_S(t)  S_j (t_1 - t)     \otimes  B_k(t_2-t) \rho_B B_j(t_1 -t) \\
      &\qquad \qquad  +  \rho_S(t) S_k(t_2 -t)   S_j (t_1 - t)     \otimes \rho_B B_k(t_2-t)  B_j(t_1 -t) \Big) dt_2 dt_1 + O(\lambda^4).
\end{aligned}
\end{equation}
Here $\rho_S(t)= e^{t\cL_0} \rho_S(0).$

After taking the partial trace, the bath correlation functions emerge.
Specifically, we define the two-point bath correlation functions,
\begin{equation}\label{bcf}
    C_{j,k}(t_1-t_2) = \tr_B(\rho_B B_j(t_1) B_k(t_2)).
\end{equation}

Now, after tracing out the bath, we arrive at,
\begin{equation}\label{eq: non-mark}
    \tr_B\left( U(t) \rho_S(0) \otimes \rho_B U(t)^\dag \right)  = e^{t\cL_0}\rho_S(0) + \cM_2(\rho_S(t), t) + O(\lambda^3),
\end{equation}
where $\cM_2$ is known as the cumulant \cite{breuer2002theory}
    \begin{equation}\label{eqn:cM2}
    \begin{aligned}
      &  \cM_2[\rho_S(t),t]   \\
       =& -  \sum_j \sum_k \int_0^t \int_0^{t_1}  \Big(
        S_j (t_1 - t)  S_k(t_2 -t) \rho_S(t)      C_{jk}(t_1-t_2)  
   -  S_j (t_1 - t) \rho_S(t) S_k(t_2 -t)   C_{kj}(t_2-t_1)    \\
      & -  S_k(t_2 -t) \rho_S(t)  S_j (t_1 - t)      C_{jk}(t_1-t_2)    +  \rho_S(t) S_k(t_2 -t)   S_j (t_1 - t)      C_{kj}(t_2-t_1)  \Big) dt_2 dt_1.
    \end{aligned}
\end{equation}
Similarly, higher-order cumulants are multiple integrals that involve higher-order bath correlation functions. Since the two-point BCF \eqref{bcf} are the most commonly used, we will work with the approximation from $\cM_2$.  

The dynamics in \eqref{eq: non-mark} can be reduced to Lindblad, if the bath correlation length is close to zero, i.e., there is a scale separation \cite{carmichael2013statistical}. In general, the dynamics in  \eqref{eq: non-mark}, however, is non-Markovian. One canonical representation of non-Markovian quantum dynamics is through the time-local quantum master equations \cite{hall2014canonical}, 
 \begin{equation}\label{TLQ}
\frac{d}{dt} \rho_S= 
 - i[H_S + \Delta H(t),\rho_S ] + \sum_{j,k=1}^{N-1} c_{jk}(t)\bigl(2V_k \rho_S V_j^{\dagger} - V_j^{\dagger}V_k \rho_S - \rho_S V_j^{\dagger}V_k\bigr).
 \end{equation} 
Here the time-dependent
operators are projected to $V_j$'s, which are usually chosen to be fixed orthonormal basis in $\mathcal H_S$ so that time-dependence is moved to a set of coefficients $c_{jk}(t)$. 
 The generator on the right hand side, denoted by $\cL_{C}(t)$, has a time-dependence, due to the removal of the time scale separation assumption. 
 The Hermitian matrix $\Delta H(t)$ acts as a Lamb shift. Meanwhile, the Hermitian matrix $C=(c_{j,k})$ can be related to the bath correlation functions and the projection of the system operator $S_j(t)$ on a fixed basis $V_j.$ Unlike Markovian dynamics, the matrix $C$ might not be positive definite.

\begin{remark}
  In the weak coupling setting, both $\Delta H$ and $(a_{j,k})$ are $\mathcal{O}(\lambda^2)$. Therefore, the cumulant expansion in \cref{eq: non-mark}
    can be viewed as the second-order Dyson series expansion of \cref{TLQ}. It will be convenient for later derivations to express the cumulant as,
    \begin{equation}\label{TO2M2}
      \cM_2 ( \rho_S(t),t) =\big(\cT e^{\int_0^t \cL_{TL}(t')dt' } - e^{t \cL_0} \big) \rho_S(0) + \mathcal{O}(\lambda^3 t^3).
\end{equation}
\end{remark}

\begin{remark}
The matrix $[c_{jk}(t)]$ is Hermitian, and therefore, a unitary transformation can reduce it to a diagonal matrix, leading to a diagonal Lindblad-like equation. 
Meanwhile,  for non-Markovian dynamics, the matrix $[c_{jk}(t)]$ may not be positive definite. As a result, it can not be directly implemented by a direct Lindblad simulation algorithm due to the lack of CPTP property.  However, 
we can separate $c_{j,k}= a_{j,k } - b_{j,k}$ with both 
since $c_{jk}(t) = O(\lambda^2),$ one can separate $[a_{jk}(t)]$ and $[b_{jk}(t)]$ being positive definite. This separates $\cM_2$ accordingly, 
$\cM_2 = \cM_2^+ - \cM_2^-$, both of which, as in \cref{TO2M2}, by be estimated by simulating Lindblad dynamics. 
 \end{remark}

\bigskip 

\subsection{Two-point correlations. }  We now move to the two-point correlation functions in \cref{eq:chi_sys}.

As elaborated in \cite{carmichael2013statistical}, such correlation functions can be simply obtained from a Heisenberg picture for the observables. Further inspired by the treatment in \cite{carmichael2013statistical}, we simplify \cref{eq:chi_sys} to, 
\begin{equation}\label{eq: chit1t2}
     \chi(t_1,t_2)=-i \langle O_1(t_1) O_2(t_2) \rangle + i \langle  O_2(t_2) O_1(t_1) \rangle.
\end{equation}
In light of the time ordering $t_1 \geq t_2 \geq 0$, we can write the second term as 
\begin{equation}\label{eq: o1t1o2t2'}
     \big\langle O_2(t_2) O_1(t_1) \big\rangle = \tr_S\Big(O_1
     \tr_B\left( 
     U(t_1-t_2) \rho_\text{tot}(t_2)  O_2\otimes I_B  U(t_1-t_2)^\dag  
 \right) \Big).
\end{equation} 
Therefore, the propagator $U(t_1-t_2)$ is forward in time. 
As a result, we arrive at a compact formula for the response function involving a commutator for $t_1 \geq t_2$,
\begin{equation}\label{chi-comm}
     \chi(t_1,t_2)=-i \tr_S\Big(O_1
    {\tr_B\left( 
     U(t_1-t_2) \left[O_2\otimes I_B, \rho_\text{tot}(t_2)   \right]  U(t_1-t_2)^\dag  
 \right)} \Big).
\end{equation}
We show the derivation of this formula in \cref{proof-chi} for completeness.

The QRT \eqref{eq:QRT} provides a short-time approximation where the separable form \eqref{separable-form} can be assumed. To derive a more accurate approximation that goes beyond \eqref{separable-form} and the Markov assumption, we use a perturbative approximation, and replace the empirical form \eqref{separable-form} by a systematic expansion,
\begin{equation}\label{rho-expansion}
  \rho_\text{tot}(t) = \rho_\text{tot}^{(0)}(t) + \rho_\text{tot}^{(1)}(t) + \rho_\text{tot}^{(2)}(t) + \mathcal{O}(\lambda^3).  
\end{equation}
Here $\rho_\text{tot}^{(n)}(t) = \mathcal{O}(\lambda^n).$ Similarly, we can expand the unitary in \cref{chi-comm}  as follows,
\begin{equation}\label{U-expansion}
  U= U_0 + U_1 + U_2 + \mathcal{O}(\lambda^3),
\end{equation}
with $U_n= \mathcal{O}(\lambda^n).$

According to the theory of open quantum systems, the term $\rho^{(0)}(t)$ simply results in a system dynamics in the absence of the bath, $\rho^{(1)}(t)$ has no effect on the system dynamics, while  $\rho^{(2)}(t)$ incorporates the bath correlation functions (BCF) and leads to the dissipation and Lamb-shift terms in the Markovian regime \cite{carmichael2013statistical}. On the other hand, in the non-Markovian regime,  $\rho^{(2)}(t)$ leads to a time-local super-operator \cite{hall2014canonical}, which can also be represented through hierarchical equations.  

However, the roles of these terms in Kubo's response function \eqref{chi-comm} are quite different. Aside from the lack of \cref{separable-form}, the unitary operator $U$ in \cref{chi-comm} is being applied to a commutator. Moreover, the commutator, due to the time evolution to time $t_2$, is no longer unentangled. 

The leading term in \cref{rho-expansion} represents the dynamics without system/bath interaction: $U_0= e^{-itH_S} \otimes e^{-itH_B}$, and $\rho_\text{tot}^{(0)}(t) = e^{t\cL_0} \rho_S(0) \otimes \rho_B. $  
A substitution into \cref{chi-comm} thus yields,
\begin{equation}
\begin{aligned}
  & \tr_B\left( 
     U_0(t_1-t_2) \left[O_2\otimes I_B, \rho^{(0)}_\text{tot}(t_2)   \right]  U_0(t_1-t_2)^\dag  
 \right)\\
   = &    e^{(t_1 - t_2) \mathcal L_0} \big[e^{t_2\cL_0} \rho_S(0), O_2\big]
    +\cM_2\bigl(e^{(t_1 - t_2) \mathcal L_0}  \big[e^{t_2\cL_0}\rho_S(0), O_2\big], t_1 -t_2 \big), \\
   =& e^{(t_1 - t_2) \mathcal L_0} \big[ \rho_S(t_2), O_2\big] + 
   \big( \cT e^{ \int_0^\tau \cL_C(t)dt } - e^{\tau \cL_0}  \big) \big[\rho_S(t_2), O_2\big] + \mathcal{O}(\lambda^3).
\end{aligned}
\end{equation}
Here we have used the formula \eqref{TO2M2} derived in the previous section.  Notice that even at this step, the BCF already plays a role.  By multiplying these terms by $O_1$ and taking the trace over $\mathcal H_S$, we arrive at the first two terms in \cref{eq:thmchit1t2}. 

In addition, the second order term in \cref{rho-expansion} corresponds to the double integral on the right hand side of \cref{lim-lindblad} that leads to the second order cumulant  \cref{eqn:cM2}. 
\[
  \tr_B\left(   U(t_1-t_2) [ \rho^{(2)}_\text{tot}(t_2), O_2]\otimes I_B  U(t_1-t_2)^\dag  \right) =  e^{(t_1 - t_2) \mathcal L_0}   \left[ \bigl(\cT e^{ \int_0^{t_2} \cL_C(t)dt } - e^{t_2 \cL_0} \bigr)\rho_S(0),  O_2  \right]  + \mathcal{O}(\lambda^3). 
\]
This gives the term on the second line of \cref{eq:thmchit1t2}. 
 
The first order term $\rho_\text{tot}^{(1)}(t)$, which has partial trace zero gives highly non-trivial terms due to the commutator with $O_2\otimes I_B$. They give rise to the remaining terms in \cref{eq:thmchit1t2}.  We postpone the derivation to the appendix.  Instead, 
let us outline how each of the terms can be simulated. The first term represents the two-point correlation in the absence of the bath. The trace can be written as,
\[ \tr\left(O_1   e^{(t_1 -t_2) \mathcal L_0} \big[ e^{t_2 \cL_0} \rho_S(0), O_2 \big] \right).  \]
$e^{t \cL_0}$ corresponds to a unitary dynamics and can be simulated via Hamiltonian simulation techniques. Meanwhile, $ \big[ e^{t_2 \cL_0} \rho_S(0), O_2 \big]$ must be prepared in order to enable further Hamiltonian evolutions $ e^{(t_1 -t_2) \mathcal L_0} $. We will show in the next section that the commutator can be approximated by 
a finite difference method using Hamiltonian simulations with Hamiltonian $O_2.$

There are two terms in \cref{eq:thmchit1t2} involving the adjoint of the time evolution, e.g.,
\[
\tr_S\left(O_1   \left[ \rho_S(t_2), \big( \cT e^{ \int_0^\tau \cL_0 + \cL_A(t)dt} -   e^{ \tau \cL_0} \big)^\dag O_2 \right]  \right). 
\]
Using the cyclic property of the trace operator, we can rewrite this as,
\[
\tr_S\left(O_2 \big( \cT e^{ \int_0^\tau \cL_0 + \cL_A(t)dt} -   e^{ \tau \cL_0} \big) \big[O_1, \rho_S(t_2) \big]  \right), 
\]
which again can be simulated and estimated using the algorithms described above.

\section{Quantum Algorithms for Evolving Commutators and Anti-Commutators}
\label{sec: alg}

We first assume that $\rho$ is a density operator that we can prepare by Hamiltonian/Lindblad simulations, and $O$ is a Hermitian operator that comes from either $O_1$ or $O_2$ in \cref{eq:thmchit1t2}. We first discuss how to construct,
\[
 \big[O, \rho \big], \quad \big\{O, \rho \big\}
\]
that can be further evolved in time $t$, as 
\[
 e^{t\cL}(\big[O, \rho \big]),  e^{t\cL} (\big\{O, \rho \big\}),
\]
respectively.

The basic observation is that, 
\begin{equation}
\begin{aligned}
      \frac{d}{dt} e^{-itO} \rho e^{itO} \Big|_{t=0} & =  -i[O, \rho],   \\
       \frac{d}{dt} e^{tO} \rho e^{tO} \Big|_{t=0} & =  \{O, \rho\}.
\end{aligned}
\end{equation}
For the commutator, the derivative can be estimated by unitary dynamics that can be simulated using Hamiltonian simulation.  The anti-commutator is a bit subtle, but we will regard the transformation on the left hand side as a Kraus form of a quantum channel up to appropriate scaling.

The following lemma implements a block-encoding of $e^{-\delta O_1}$.
\begin{lemma}
  \label{lemma:exp-be}
  Let $U_{A}$ be an $(\alpha, b, \epsilon)$-block-encoding of $A$. For any $0 \leq \delta \leq 1$, a $(1, b, \epsilon)$-block-encoding of $e^{-\delta A}$ can be constructed using $\mathcal{O}(\sqrt{\max\{\alpha, \log 1/\epsilon\} \cdot \log 1/\epsilon})$ queries to $U_{A}$.
\end{lemma}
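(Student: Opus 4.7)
The plan is to realize $e^{-\delta A}$ through the quantum singular value transformation (QSVT) applied to the given block-encoding $U_A$. Since the block-encoding normalizes $A$ by $\alpha$, I work with the operator $A/\alpha$ whose singular values lie in $[-1,1]$. The target is to find a polynomial $p$ that uniformly approximates $f(x)=e^{-\delta\alpha x}$ on $[-1,1]$ (or on $[0,1]$ when $A$ is assumed positive semidefinite, which is the regime in which $\|e^{-\delta A}\|\le 1$ and a unit-normalization block-encoding is meaningful) and that satisfies $\|p\|_{[-1,1]}\le 1$, so that QSVT produces a valid $(1,\cdot,\cdot)$-block-encoding of $p(A/\alpha)$.

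Constructing such a $p$ is standard. First I would truncate the Chebyshev (Jacobi--Anger) expansion of $f$, following the analysis of Sachdeva--Vishnoi and its block-encoded variant in Gilyén--Su--Low--Wiebe. The Bessel-function tail bound on Chebyshev coefficients yields a polynomial of degree
\[
d=\mathcal{O}\!\left(\sqrt{\max\{\delta\alpha,\log(1/\epsilon)\}\cdot \log(1/\epsilon)}\right)
\]
that approximates $f$ to additive error $\epsilon/2$ in sup norm on the relevant interval, with $\|p\|_{[-1,1]}\le 1$ after an infinitesimal rescaling absorbed into constants. The two regimes $\delta\alpha\gtrsim \log(1/\epsilon)$ and $\delta\alpha\lesssim \log(1/\epsilon)$ account for the $\max$ appearing in the complexity; since $\delta\le 1$, this reduces to $\mathcal{O}\bigl(\sqrt{\max\{\alpha,\log(1/\epsilon)\}\log(1/\epsilon)}\bigr)$, matching the claim.

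Finally I would invoke the QSVT polynomial-transformation theorem: given an $(\alpha,b,\epsilon)$-block-encoding of $A$ and a bounded polynomial of degree $d$, decomposed into its even and odd parts and combined via a one-ancilla LCU, one obtains a $(1,b+\mathcal{O}(1),\,d\epsilon+\epsilon/2)$-block-encoding of $p(A/\alpha)$ using $\mathcal{O}(d)$ calls to $U_A$ and $U_A^{\dagger}$. Rescaling the input error by $1/d$ collapses the total block-encoding error to $\epsilon$. The main technical obstacle is the joint control of (i) polynomial degree, (ii) sup-norm $\le 1$ for unit normalization, and (iii) parity constraints needed by QSVT; all three fall out of the classical Chebyshev analysis cited above, so no new ideas beyond a careful invocation of the standard theorems are required.
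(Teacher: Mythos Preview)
The paper does not actually prove this lemma; it is stated as a black-box result, with the only accompanying remark being the sentence ``For small enough $\epsilon$, the query complexity $\mathcal{O}(\log 1/\epsilon)$ dominates.'' The intended source is the QSVT framework of \cite{GLSW19}, which the paper already cites for Hamiltonian simulation. Your proof sketch is precisely the standard derivation from that reference: approximate $x\mapsto e^{-\delta\alpha x}$ on $[-1,1]$ by a Chebyshev truncation of degree $\mathcal{O}(\sqrt{\max\{\delta\alpha,\log(1/\epsilon)\}\log(1/\epsilon)})$, then apply QSVT to the block-encoded $A/\alpha$. So your approach is the canonical one and is what the paper is implicitly invoking.

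Your caveat about needing $A\succeq 0$ for a normalization factor of~$1$ is well taken and is in fact a point the paper glosses over: as written, $e^{-\delta A}$ for a general Hermitian $A$ with spectrum in $[-\alpha,\alpha]$ can have norm up to $e^{\delta\alpha}$, so a $(1,b,\epsilon)$-block-encoding is only literally possible under a positivity assumption or after absorbing an $e^{\delta\alpha}$ factor into the normalization. In the paper's downstream use (the centred-difference maps $\mathcal{M}_{\pm\delta}$), $\delta$ is small and the resulting constant overhead is harmless, but you are right to flag it.
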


For small enough $\epsilon$, the query complexity $\mathcal{O}(\log 1/\epsilon)$ dominates.

The following lemma is an important tool for approximating the anti-commutator terms $\{O, \rho\}$ for observable $O$.
\begin{lemma}[Second--order centred difference]
\label{lemma:second-order-centred-difference}
Let $O$ be a bounded Hermitian operator on a finite-dimensional Hilbert
space, $\rho$ a density matrix, and define the one-parameter family
\[
\mathcal{M}_{\delta}(\rho)=e^{\delta O} \rho e^{\delta O},\qquad 
\delta\in[-\delta_{0},\delta_{0}],
\]
for some $\delta_{0}>0$.  
For $|\delta|\le\delta_{0}$, the following holds 
\begin{equation}
    \mathcal{M}_{\delta}(\rho)-\mathcal{M}_{-\delta}(\rho) = \delta \{O,\rho\}\;+\;R_{2}(\delta)
\end{equation}
where the remainder satisfies the operator-norm bound
\[
\;
\|R_{2}(\delta)\|
\le
\frac{4}{3}
\delta^3
\|O\|^{3}\,
e^{2\delta_{0}\norm{O}},  \qquad \forall \; |\delta|\le\delta_{0}.
\]
Further, this finite difference formula can be extended to 4th order, 
\begin{equation}
  \frac{  -\mathcal{M}_{2\delta}(\rho) +8\mathcal{M}_{\delta}(\rho)-8\mathcal{M}_{-\delta}(\rho)+\mathcal{M}_{-2\delta}(\rho)}{12} =   \delta \{O,\rho\}\;+\;R_{4}(\delta),
\end{equation}
where,
\[
\|R_{4}(\delta)\|
\le
\frac{32}{30}\,
\delta^{5}\,
\|O\|^{5}\,
e^{2\delta_{0}\|O\|}. 
\]
\end{lemma}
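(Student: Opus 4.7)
The plan is to treat $\mathcal{M}_\delta(\rho)=e^{\delta O}\rho e^{\delta O}$ as a smooth operator-valued function of the real parameter $\delta$ and apply Taylor's theorem with an explicit remainder. Define $g(\delta):=\mathcal{M}_\delta(\rho)$ and observe that $g$ satisfies the operator ODE $g'(\delta)=Og(\delta)+g(\delta)O=\{O,g(\delta)\}$, so in particular $g'(0)=\{O,\rho\}$. Iterating the Leibniz rule gives the closed form
\[
 g^{(n)}(\delta)=\sum_{k=0}^{n}\binom{n}{k}O^{k}\,g(\delta)\,O^{n-k},
\]
from which the operator-norm estimate $\|g^{(n)}(\delta)\|\le 2^{n}\|O\|^{n}\|g(\delta)\|$ follows, together with $\|g(\delta)\|\le\|\rho\|\,e^{2\delta_0\|O\|}\le e^{2\delta_0\|O\|}$ on $[-\delta_0,\delta_0]$. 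These two ingredients are all that is needed to control the remainders.

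Next I introduce the odd combination $h(\delta):=g(\delta)-g(-\delta)$. Since $h(-\delta)=-h(\delta)$, all even-order derivatives of $h$ at $0$ vanish, and $h'(0)=2g'(0)=2\{O,\rho\}$, which already identifies the leading coefficient in the statement. Applying Taylor's theorem with integral remainder at order three,
\[
 h(\delta)=\delta\,h'(0)+\int_{0}^{\delta}\frac{(\delta-s)^{2}}{2}\,h'''(s)\,ds,
\]
and bounding $\|h'''(s)\|\le\|g'''(s)\|+\|g'''(-s)\|\le 2\cdot 2^{3}\|O\|^{3}e^{2\delta_0\|O\|}$ yields the claimed cubic remainder estimate, with the constant emerging from $|\delta|^{3}/3!$ multiplied by the derivative bound.

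For the fourth-order formula I use Richardson extrapolation: writing the Taylor expansion for $h$ to order five,
\[
 h(\delta)=\delta h'(0)+\tfrac{\delta^{3}}{6}h'''(0)+\int_{0}^{\delta}\tfrac{(\delta-s)^{4}}{24}h^{(5)}(s)\,ds,
\]
the combination $-h(2\delta)+8h(\delta)$ eliminates the $\delta^{3}$ term exactly because $-8+8=0$, leaving $6\delta h'(0)$ plus a combined Taylor remainder. Dividing by $12$ recovers $\delta\{O,\rho\}$ and produces a remainder of the form $\tfrac{1}{12}\bigl(-R_{5}(2\delta)+8R_{5}(\delta)\bigr)$; estimating each piece with $\|h^{(5)}\|\le 2\cdot 2^{5}\|O\|^{5}e^{2\delta_0\|O\|}$ and the $(2\delta)^{5}/120$ versus $\delta^{5}/120$ weights yields the quintic bound.

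The only delicate point is constant tracking: the combinatorial factor $2^{n}$ from the Leibniz expansion of $g^{(n)}$ competes with the $1/n!$ from Taylor's remainder, and the Richardson weights $(-1,8)$ together with the scaling $(2\delta)^{n}$ versus $\delta^{n}$ have to be inserted carefully to recover the stated $4/3$ and $32/30$. The calculation is entirely elementary once the ODE-based derivative bound is in hand, so no genuine obstacle arises; the proof is mechanical and the rest of the argument is bookkeeping.
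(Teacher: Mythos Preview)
The paper states this lemma without proof, so there is no argument to compare against; your Taylor-with-integral-remainder strategy is the standard route and is correct in outline.

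Two bookkeeping points are worth flagging. First, your own computation gives $h'(0)=2\{O,\rho\}$, so what you actually prove is $\mathcal{M}_\delta(\rho)-\mathcal{M}_{-\delta}(\rho)=2\delta\{O,\rho\}+R_2(\delta)$, not the printed identity; the missing factor of $2$ appears to be a typo in the statement (a dropped $\tfrac12$ on the left-hand side), and with that correction your remainder bound $\tfrac{8}{3}\delta^3\|O\|^{3}e^{2\delta_0\|O\|}$ halves to exactly the stated $\tfrac{4}{3}$. The fourth-order formula, by contrast, is consistent as written because the Richardson combination produces $6\delta h'(0)=12\delta\{O,\rho\}$ before dividing by $12$. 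Second, the crude triangle inequality on the four integral remainders in the fourth-order case yields $\tfrac{16}{9}\delta^5\|O\|^{5}e^{2\delta_0\|O\|}$ rather than the stated $\tfrac{32}{30}$; recovering the sharper constant requires the single-signed Peano kernel of the five-point stencil (equivalently, apply the scalar Lagrange error $\tfrac{h^4}{30}f^{(5)}(\xi)$ to $\phi\circ g$ for unit linear functionals $\phi$ and take the supremum). Neither discrepancy affects the order of the remainder or the downstream use of the lemma.
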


Now, we show how to estimate each term in \cref{eq:thmchit1t2}   using the following lemma.
\evolvecommutator*
\begin{proof}
  We first consider the anti-commutator $\{O_1, \rho\}$. For small enough $\delta$, we use the maps $\mathcal{M}_{\pm\delta}$, and $\mathcal{M}_{\pm2\delta}$ as in \cref{lemma:second-order-centred-difference}.

  These maps can be implemented with precision $\epsilon$ using \cref{lemma:exp-be} with $\mathcal{O}(1/\epsilon)$ queries to $U_{O_1}$. Because of normalization, we obtained a normalized version of $\mathcal{M}_{\pm\delta}(\rho)$ and $\mathcal{M}_{\pm2\delta}(\rho)$, i.e., $\mathcal{M}_{\pm \delta}(\rho)/\tr(\mathcal{M}_{\pm \delta}(\rho))$ and $\mathcal{M}_{\pm2\delta}(\rho)/\tr(\mathcal{M}_{2\pm\delta}(\rho))$. 
  For small $\delta$, the postselection probability is lower bounded by a constant.

  Using this normalized state as the input state of $e^{\mathcal{L}t}(\cdot)$, we use \cref{lemma:expectation} to estimate the following quantities
  \begin{align}
    \xi_1 &\coloneqq \tr(O_2 e^{\mathcal{L}t}(\mathcal{M}_{\delta}(\rho)))/\tr(\mathcal{M}_{\delta}(\rho)) \\ 
    \xi_2 &\coloneqq \tr(O_2 e^{\mathcal{L}t}(\mathcal{M}_{-\delta}(\rho)))/\tr(\mathcal{M}_{-\delta}(\rho)) \\
    \xi_3 &\coloneqq \tr(O_2 e^{\mathcal{L}t}(\mathcal{M}_{2\delta}(\rho)))/\tr(\mathcal{M}_{2\delta}(\rho)), \text{ and }\\
    \xi_4 &\coloneqq \tr(O_2 e^{\mathcal{L}t}(\mathcal{M}_{-2\delta}(\rho)))/\tr(\mathcal{M}_{-2\delta}(\rho)) 
  \end{align}
  with additive error $\mathcal{O}(\epsilon_0)$. The cost for this estimation is
  \begin{align}
    \label{eq:cost1}
    \mathcal{O}((c_{\mathcal{L}} + c_{O_2} + c_{O_1} + c_{\rho})\alpha/\epsilon_0).
  \end{align}

  To obtain an estimate of the desired quantity, we also need to use \cref{lemma:expectation} again to estimate the traces 
  \begin{align}
    \xi_5 \coloneqq \tr(\mathcal{M}_{\delta}(\rho)), \xi_6 \coloneqq \tr(\mathcal{M}_{-\delta}(\rho)), \xi_7 \coloneqq \tr(\mathcal{M}_{2\delta}(\rho)), \text{ and } \xi_8 \coloneqq \tr(\mathcal{M}_{-2\delta}(\rho))
  \end{align}
  with additive error $\mathcal{O}(\epsilon_0)$. 
  The costs for estimating $\xi_5$, $\xi_7$, $\xi_7$, and $\xi_8$ are dominated by \cref{eq:cost1}. Finally, we have
  \begin{align}
    \tr(O_2 e^{\mathcal{L}t}(\{\rho, O_1\})) = (\xi_4 \xi_5 +8\xi_1\xi_5 -8\xi_2\xi_6 + xi_4\xi_8)/(12\delta) + \mathcal{O}((\epsilon_0+\delta^5)/\delta).
  \end{align}

  For the commutator, we use a $\delta$-time Hamiltonian evolution to approximate it. In particular, we use a variation of \cref{lemma:second-order-centred-difference} with the Kraus operator for $\mathcal{M}_{\delta}$ replaced by $e^{i\delta O}$. Similar to the above analysis for the anti-commutator terms, the commutator terms $\tr(O_2 e^{\mathcal{L}t}([\rho, O_1]))$ can be estimated with error $\mathcal{O}((\epsilon_0 + \delta^5)/\delta)$ with the cost the same as \cref{eq:cost1}.

It suffices to choose $\epsilon_0 = \mathcal{O}(\epsilon^{5/4})$ and $\delta = \mathcal{O}(\epsilon^{1/4})$, to bound the estimate error by $\epsilon$.

\end{proof}

\algthm*
\begin{proof}
  This algorithm is based on \cref{thm:chit1t2} and \cref{lemma:o2rhoo1}. 
  We first label each term of \cref{eq:thmchit1t2} as follows.
  \begin{equation}
      \begin{aligned}
           \label{eq:thmchit1t2-labeled}
                   & \chi(t_1,t_2) =\\
                   & \underbrace{i \tr\left(O_1   e^{\tau \mathcal L_0} \big[\rho_S(t_2), O_2 \big] \right)}_{\textcircled{\raisebox{-.9pt} {1}}}
                   +\underbrace{i  \tr\left(O_1  \big( \cT e^{ \int_0^\tau \cL_C(t)dt } - e^{\tau \cL_0}  \big) \big[\rho_S(t_2), O_2\big] \right)}_{\textcircled{\raisebox{-.9pt} {2}}}\\
        + & \underbrace{i \tr\left(O_1 e^{\tau \mathcal L_0}   \left[ \bigl(\cT e^{ \int_0^{t_2} \cL_C(t)dt } - e^{t_2 \cL_0} \bigr)\rho_S(0),  O_2  \right] \right)}_{\textcircled{\raisebox{-.9pt} {3}}}  \\
        + & \underbrace{i \tr\left(O_1 \big( \cT e^{ \int_0^\tau \cL_0 + \cL_A(t)dt } -   e^{ \tau \cL_0} \big) [\rho_S(t_2), O_2]\right)}_{\textcircled{\raisebox{-.9pt} {4}}}
        + \underbrace{i\tr\left( O_1  \left[ \cT \big( e^{ \int_0^\tau \cL_0 + \cL_A(t)dt} -   e^{ \tau \cL_0} \big) \rho_S(t_2), O_2\right]\right)}_{\textcircled{\raisebox{-.9pt} {5}}} \\
        - &\underbrace{i   \tr\left(O_1   \left[ \rho_S(t_2), \big( \cT e^{ \int_0^\tau \cL_0 + \cL_A(t)dt} -   e^{ \tau \cL_0} \big)^\dag O_2 \right]  \right)}_{\textcircled{\raisebox{-.9pt} {6}}} - \underbrace{\tr\left(O_1 \big( \cT e^{ \int_0^\tau \cL_0 + \cL_B(t)dt} -   e^{ \tau \cL_0} \big) \{\rho_S(t_2), O_2\}\right)}_{\textcircled{\raisebox{-.9pt} {7}}} \\ 
        + & \underbrace{\tr\left(O_1 \left\{ \big( \cT e^{ \int_0^\tau \cL_0 + \cL_B(t)dt} -   e^{ \tau \cL_0} \big) \rho_S(t_2), O_2\right\} \right)}_{\textcircled{\raisebox{-.9pt} {8}}} - \underbrace{\tr\left(O_1 \left\{ \rho_S(t_2), \cT \big( e^{ \int_0^\tau \cL_0 + \cL_B(t)dt} -   e^{ \tau \cL_0} \big)^\dag O_2 \right\} \right)}_{\textcircled{\raisebox{-.9pt} {9}}} \\
        + & \underbrace{i  \tr\left(O_1 \left[ \big( \cT e^{ \int_0^\tau \cL_0 + L_{H_B(t)}dt} -   e^{ \tau \cL_0} \big) \rho_S(t_2),    O_2\right] \right)}_{\textcircled{\raisebox{-.9pt} {10}}} \\
        -& \underbrace{i \tr\left(O_1 \left[\rho_S(t_2),  \big(\cT e^{ \int_0^\tau \cL_0 + L_{H_B(t)} dt} -   e^{ \tau \cL_0} \big) O_2\right] \right)}_{\textcircled{\raisebox{-.9pt} {11}}}
  + \mathcal{O}(\lambda^3).
      \end{aligned}
  \end{equation}
  The terms \textcircled{\raisebox{-.9pt} {1}}, \textcircled{\raisebox{-.9pt} {2}}, \textcircled{\raisebox{-.9pt} {3}}, \textcircled{\raisebox{-.9pt} {4}}, and \textcircled{\raisebox{-.9pt} {7}} can be estimated using \cref{lemma:o2rhoo1}. The terms \textcircled{\raisebox{-.9pt} {5}}, \textcircled{\raisebox{-.9pt} {8}}, and \textcircled{\raisebox{-.9pt} {10}} are more straightforward: we just apply time-dependent Lindbladian simulation and then \cref{lemma:expectation}. Estimating term \textcircled{\raisebox{-.9pt} {6}} and \textcircled{\raisebox{-.9pt} {9}} can be reduced to the first case, as for a map $\mathcal{M}$, state $\rho$, and observables $O_1, O_2$, we have
  \begin{align}
    \tr(O_1[\rho, \mathcal{M}^{\dag}(O_2)]) &= \tr(O_1\rho\mathcal{M}^{\dag}(O_2)) - \tr(O_1\mathcal{M}^{\dag}(O_2)\rho) \\
                                            &= \tr(O_1\rho\mathcal{M}^{\dag}(O_2)) - \tr(\rho O_1\mathcal{M}^{\dag}(O_2)) \\
                                            &=\tr(O_2\mathcal{M}(O_1\rho)) - \tr(O_2\mathcal{M}(\rho O_1)) \\
                                            \label{eq:o2mo1rho}
                                            &=\tr(O_2\mathcal{M}([O_1, \rho])).
  \end{align}
  The term \textcircled{\raisebox{-.9pt} {11}} can be dealt with similarly, with the exception that the map $\cT e^{ \int_0^\tau \cL_0 + L_{H_B(t)} dt} -   e^{ \tau \cL_0}$, instead of its adjoint, is applied to $O_2$. Once it is converted to the form as in \cref{eq:o2mo1rho}, it is in the form of $\tr(O_2\mathcal{M}^{\dag}[O_1, \rho])$. This adjoint map $\mathcal{M}^{\dag}$ is easy to implement because $\cT e^{ \int_0^\tau \cL_0 + L_{H_B(t)} dt} -   e^{ \tau \cL_0}$ is Hamiltonian evolution.

  The simulation costs are summarized in \cref{tab:quantum_algorithms}. Note that the Lindbladian is given in the GKS form (see \cref{TLQ}) instead of the Lindblad form. Despite the additional complication in representation, it does not increase the complexity of the simulation algorithm, as one can efficiently convert the GKS form to a Lindblad form as shown in \cite{CL17}. Then, the claimed cost follows from \cref{lemma:o2rhoo1}.

\end{proof}

\section{Summary and Discussions}\label{sec: sum}

We have extended linear-response theory to an open-system setting in which the response kernel is written entirely in terms of reduced system dynamics.  In particular, we derived a non-Markovian generalisation of the quantum regression theorem that remains \emph{time-local} and is expressed through simulation primitives available on quantum hardware (time-dependent Hamiltonian and Lindblad evolutions).  We further provided efficient routines for propagating commutators and anti-commutators, yielding a modular algorithm for estimating two-point response functions.   The resulting estimators have cost polylogarithmic in $\dim\mathcal H_S$, indicating an asymptotic advantage over classical approaches whose cost scales at least polynomially in the system dimension.

In summary, the paper establishes a path from non-Markovian linear response to concrete, resource-efficient quantum algorithms.  We expect the techniques introduced here to go beyond the typical simulation tasks in open quantum systems and provide a useful toolkit for studying non-equilibrium properties. 

\smallskip 

The are two promising generalizations that can further extend the predicative capability of the current method. 

First, removing the weak-coupling assumption is a natural next step.  
For Gaussian environments, the bath is completely characterised by its two-point correlation function. Such a direction has been pursued in \cite{ban2018two} with a perturbative approximation constructed with a diagrammatic expansion.  The nonperturbative embeddings (e.g., pseudomode ) can generate exact time-local generators for the reduced dynamics \cite{tamascelli2018nonperturbative,huang2024unified} and point to a different direction.  
Adapting our construction to such embeddings would give formulas for the response function at arbitrary coupling while preserving the system-only, time-local structure that makes the present approach algorithmically attractive.

Secondly,  extending the derivation to nonlinear response requires multi-time correlation functions.  
While higher-order Kubo formulas are well known in the Markovian setting \cite{carmichael2013statistical}, carrying out time-local, system-only expansion to $n$-point correlators would provide systematically improvable predictions beyond the separability assumption.  
The main challenges are the combinatorial growth of nested (anti-)commutators and the control of truncation errors.

\section*{Acknowledgement}
Li's research on this project has been supported by the NSF Grant No. DMS-2111221 and No. CCF-2312456. CW acknowledges support from NSF CCF-2312456 and CCF-2238766.

\appendix

\section{Block-encoding and related tools}
Let $A$ be an operator acting on $n$ qubits, we say that an $(n+b)$-qubit unitary $U_A$ is an $(\alpha, b, \epsilon)$-\emph{block-encoding} of $A$ if
\begin{align}
  \norm{A - \alpha (\bra{0^{\otimes b}} \otimes I^{2^n}) U_A (\ket{0^{\otimes b}} \otimes I_{2^n})} \leq \epsilon,
\end{align}
where $I_{2^n}$ is the identity operator acting on $n$ qubits. Intuitively, $A$ appears in the upper-left block of $A$:
\begin{align}
  U_A = 
  \begin{pmatrix}
    A/\alpha & \cdot\\
    \cdot & \cdot
  \end{pmatrix},
\end{align}
and we call $\alpha$ the normalizing factor.

The following lemma from~\cite{Rall20} is used to estimate the expectation of block-encoded observables.
\begin{lemma}[{\cite{Rall20}}]
  \label{lemma:expectation}
  Let $A$ be a Hermitian which can be block-encoded by a unitary with scaling factor $\alpha$ and implementing cost $Q$. Let $\rho$ be a state whose purification can be prepared by a circuit with cost $R$. For all $\epsilon, \delta > 0$, there exists a quantum algorithm that produces an estimate $\xi$ of $\tr(\rho A)$ such that
  \begin{align}
    |\xi - \tr(\rho A)| \leq \epsilon
  \end{align}
  with probability at least $1-\delta$. This algorithm has circuit complexity $\mathcal{O}((R+Q)\frac{\alpha}{\epsilon}\log\frac{1}{\delta})$.
    
\end{lemma}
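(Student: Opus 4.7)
The plan is to convert expectation estimation into amplitude estimation using the block-encoding structure, then boost the success probability via the median trick. First, I would rewrite the expectation as an overlap naturally compatible with the block-encoding. Let $\ket{\psi}_{SE}$ be a purification of $\rho$, produced by a circuit $V_\psi$ of cost $R$, and let $U_A$ be the block-encoding with $b$ flag qubits, satisfying $(\bra{0^b}\otimes I_S)U_A(\ket{0^b}\otimes I_S) = A/\alpha$. Since $A$ is Hermitian, $\tr(A\rho)$ is real, and setting $\ket{\Phi_0} = \ket{0^b}\otimes \ket{\psi}_{SE}$ and $\ket{\Phi_1} = (U_A\otimes I_E)\ket{\Phi_0}$, one has $\tr(A\rho) = \alpha\,\mathrm{Re}\bra{\Phi_0}\Phi_1\rangle$, with both states preparable at cost $\mathcal{O}(R+Q)$.

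Next I would build the standard Hadamard-test circuit with one control qubit whose outcome distribution encodes $\mathrm{Re}\bra{\Phi_0}\Phi_1\rangle$ as a Bernoulli mean $\tfrac12\bigl(1 + \mathrm{Re}\bra{\Phi_0}\Phi_1\rangle\bigr)$. Rather than sampling this Bernoulli directly, which would cost $\mathcal{O}(1/\epsilon^2)$ by Hoeffding, I would feed the Hadamard-test unitary into quantum amplitude estimation, treating the ``good'' outcome subspace as the marked subspace. Standard QAE then outputs, with constant success probability, an estimate of $\mathrm{Re}\bra{\Phi_0}\Phi_1\rangle = \tr(A\rho)/\alpha$ with additive precision $\epsilon/\alpha$ using $\mathcal{O}(\alpha/\epsilon)$ invocations of the Hadamard-test primitive, yielding gate complexity $\mathcal{O}((R+Q)\alpha/\epsilon)$ per estimate.

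Finally, I would apply the powering lemma: repeat the QAE estimator $\mathcal{O}(\log(1/\delta))$ times independently and return the median of the outputs. A Chernoff argument shows the median lies within $\epsilon$ of $\tr(A\rho)$ with probability at least $1-\delta$, giving the claimed total complexity $\mathcal{O}\bigl((R+Q)(\alpha/\epsilon)\log(1/\delta)\bigr)$. The main subtlety I expect is the normalization bookkeeping: QAE natively returns amplitudes in $[0,1]$, while the target $\tr(A\rho)$ lies in $[-\alpha,\alpha]$, so the signed real part must be routed through the Hadamard-test Bernoulli before QAE, and the precision $\epsilon/\alpha$ on the amplitude must be tracked carefully so that it translates to precision $\epsilon$ on the expectation. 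A secondary point is that any block-encoding error $\epsilon_{\mathrm{BE}}$ in $U_A$ propagates linearly through the overlap; one must therefore assume (or construct via QSVT at negligible additional overhead) a block-encoding of precision $\mathcal{O}(\epsilon/\alpha)$ to respect the overall error budget.
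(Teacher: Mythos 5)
Your proposal is correct, and since the paper imports this lemma directly from \cite{Rall20} without reproducing a proof, there is nothing in the paper itself to compare against beyond the statement. Your argument (purification plus block-encoding overlap, Hadamard test, amplitude estimation at $\mathcal{O}(\alpha/\epsilon)$ Grover iterations, median-of-$\mathcal{O}(\log 1/\delta)$ boosting) is the standard route underlying the cited result, and your bookkeeping of the $\alpha$ normalization and the block-encoding error is handled correctly.
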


\section{Derivation of the Compact Formula for the Response Function}\label{proof-chi}

We will derive the compact expression for the response function in \eqref{chi-comm}.

The starting point is the Kubo formula in linear response theory:
\[
\chi(t_1, t_2) = -i \langle [O_1(t_1), O_2(t_2)] \rangle = -i \operatorname{Tr}_{SB}\left( [O_1(t_1), O_2(t_2)] \, \rho_{\text{tot}}(0) \right).
\]
Here, $O_1$ and $O_2$ are Hermitian operators acting on the system Hilbert space $\mathcal{H}_S$, and are lifted to the total Hilbert space $\mathcal{H}_S \otimes \mathcal{H}_B$ by defining $A_1 = O_1 \otimes \mathbb{I}_B$ and $A_2 = O_2 \otimes \mathbb{I}_B$. The time evolution is generated by the total Hamiltonian $H_{\text{tot}}$ through the unitary operator
\[
U(t) := e^{-i H_{\text{tot}} t}.
\]
The Heisenberg picture observables are defined as:
\[
A_j(t) := U^\dagger(t) A_j U(t), \quad j = 1,2.
\]

Expanding the commutator in the Kubo formula, we write:
\[
\chi(t_1, t_2) = -i \operatorname{Tr}_{SB} \left( A_1(t_1) A_2(t_2) \rho_0 \right) + i \operatorname{Tr}_{SB} \left( A_2(t_2) A_1(t_1) \rho_0 \right),
\]
where $\rho_0 := \rho_{\text{tot}}(0)$.

We start with the first term. By using the Heisenberg picture of the observables and the Schr\"odinger picture of the density operator, and the cyclic property of trace, we have
\[
\begin{aligned}
-i \operatorname{Tr}_{SB} \left( A_1(t_1) A_2(t_2) \rho_0 \right) &= -i \operatorname{Tr}_{SB} \left( U^\dagger(t_1) A_1 U(t_1) U^\dagger(t_2) A_2 U(t_2) \rho_0 \right) \\
  &= -i \operatorname{Tr}_{SB} \left(  A_1 U(t_1) U^\dagger(t_2) A_2 U(t_2) \rho_0 U^\dagger(t_1) \right) \\
  &= -i \operatorname{Tr}_{SB} \left(  A_1 U(t_1-t_2)  A_2  \rho_0(t_2) U^\dagger(t_1-t_2) \right). 
\end{aligned}
\]

Similarly, for the second term, we have
\[
\begin{aligned}
i \operatorname{Tr}_{SB} \left( A_2(t_2) A_1(t_1) \rho_0 \right) &= i
\operatorname{Tr}_{SB} \left( A_1(t_1) \rho_0  A_2(t_2) \right) \\
&= i \operatorname{Tr}_{SB}   \left( A_1 U(t_1) \rho_0  U^\dag(t_2) A_2 U^\dag(t_1-t_2) \right) \\
&= i \operatorname{Tr}_{SB}   \left( A_1 U(t_1-t_2) \rho_S(t_2)  A_2 U^\dag(t_1-t_2) \right) \\
\end{aligned}
\]
This matches the required result in \eqref{chi-comm}.

\section{Asymptotic Analysis of the system-bath dynamics. }

We let $U_\text{tot}(t)= e^{-itH_\text{tot}}$ be the unitary operator for the exact system-bath dynamics according to \eqref{lvn}. We let $U_0(t)= e^{-it(H_S \otimes I_B + I_S \otimes H_B)}=U_S(t) \otimes U_B(t)$ be the uncoupled unitary operator that can be separated into unitary dynamics of the system and bath, i.e., $U_S$ and $U_B$, respectively.  For simplicity of the expressions, we denote for an operator $R$ pertaining to the density operator,
\begin{equation}
    \Gamma(t):=  U_S(t) \Gamma U_S(t)^\dag, \quad  R(t):=  U_B(t) R U_B(t)^\dag.
\end{equation}
For example, $\rho_S(t)$ indicates the dynamics of the quantum system when the interaction with the bath is absent, i.e., free evolution. 

Meanwhile, for the system and bath operators, we define their time evolution according to the Heisenberg picture,
\begin{equation}
    S_j(t):=  U_S(t)^\dag S_j U_S(t), \quad  S_j(t):=  U_B(t)^\dag B_j U_B(t).
\end{equation}

Let us study the dynamics with an unentangled initial state, (e.g., \eqref{rhotot0}), 
\begin{equation}
     \partial_t\rho(t) = -i[H_\text{tot}, \rho], \quad \rho_\text{tot}(0) = \Gamma \otimes R. 
\end{equation}
Here $\Gamma$ and $R$ will play the role of corrections to $\rho_S$ and $\rho_B$. Therefore, they themselves may not be positive. By the weak coupling assumption, i.e., $\norm{H_{SB}} \ll 1$, we can expand the solution of the equation above,
\begin{equation}\label{asymp}
    \rho(t) = \rho^{(0)}(t) + \rho^{(1)}(t) + \rho^{(2)}(t) + O(\norm{H_{SB}}^3), 
\end{equation}
where $\rho^{(n)}(t) = O(\norm{H_{SB}}^n)$. With a direct asymptotic expansion, we find that,
\begin{equation}\label{asymp1}
 \begin{aligned}
         \rho^{(0)}(t) = & U_S(t) \Gamma U_S(t)^\dag \otimes  U_B(t) R U_B(t)^\dag, \\
          \rho^{(1)}(t) = & -i \int_0^t \big[H_{SB} (t_1 - t), \rho^{(0)}(t) \big ] dt_1, \\
          \rho^{(2)}(t) = & -i \int_0^t \big[H_{SB} (t_1 - t), \rho^{(1)}(t) \big ] dt_1, \\ 
           \cdots &\quad  \cdots 
 \end{aligned}
\end{equation}

Equivalently, these terms can be obtained by using Duhamel's principle repeatedly,
\begin{equation}\label{U-duhamel}
    U_\text{tot}(t)= U_0(t) -i \int_0^t H_{SB}(t_1-t) U_\text{tot}(t) ] dt_1.
\end{equation}

With these notations in place, we have,
\begin{equation}\label{rho123}
\begin{aligned}
     \rho^{(0)}(t) =& \Gamma(t) \otimes R(t),\\
    \rho^{(1)}(t) =& -i \int_0^t \big[H_{SB} (t_1 - t), \Gamma(t) \otimes R(t) \big ] dt_1\\
     =&  -i \sum_j \int_0^t S_j (t_1 - t) \Gamma(t)  \otimes B_j(t_1 -t) R(t)    - \Gamma(t) S_j (t_1 - t) \otimes R(t) B_j(t_1 -t)  dt_1 \\
      \rho^{(2)}(t) =& -i \int_0^t \big[H_{SB} (t_1 - t), U_0(t-t_1)  \rho^{(1)}(t_1)  U_0(t-t_1)^\dag \big ] dt_1\\
      =& - \sum_j \sum_k \int_0^t \int_0^{t_1}  \Big(
        S_j (t_1 - t)  S_k(t_2 -t) \Gamma(t)     \otimes B_j(t_1 -t) B_k(t_2-t) R(t)  \\
      &\qquad  \qquad -  S_j (t_1 - t) \Gamma(t) S_k(t_2 -t)  \otimes B_j(t_1 -t) R(t) B_k(t_2-t)   \\
      &\qquad \qquad  -  S_k(t_2 -t) \Gamma(t)  S_j (t_1 - t)     \otimes  B_k(t_2-t) R(t) B_j(t_1 -t) \\
      &\qquad \qquad  +  \Gamma(t) S_k(t_2 -t)   S_j (t_1 - t)     \otimes R(t) B_k(t_2-t)  B_j(t_1 -t) \Big) dt_2 dt_1.\\
\end{aligned}
\end{equation}

\section{The proof of the main theorem \eqref{thm:chit1t2}}

We present a derivation of the response function 
\begin{equation}\label{eq: chi}
     \chi(t_1,t_2)= i \tr_S\Big(O_1
    {\tr_B\left( 
     U(t_1-t_2) \left[\rho_\text{tot}(t_2),  O_2\otimes I_B \right]  U(t_1-t_2)^\dag  
 \right)} \Big),
\end{equation}
where $U(t) $ the unitary dynamics of the system/bath Hamiltonian $ H_\text{tot} := H_S \otimes I_B + I_S \otimes H_B +  H_{SB}.$

 We follow the assumption \eqref{separable-form} that the system and bath are separable and the perturbation analysis. 
  Specifically, inside the equation \eqref{eq: chi}, we insert
\begin{equation}\label{rhotot012}
  \rho_\text{tot}(t_2)  =   \rho^{(0)}_\text{tot}(t_2)+  \rho^{(1)}_\text{tot}(t_2) +   \rho^{(2)}_\text{tot}(t_2)   + \mathcal{O}(\lambda^3).
\end{equation}
Similarly, the unitary operator $U= U_0 + U_1 + \cdots $. 

We separate the response function into three terms accordingly and treat them separately,
\begin{equation}\label{chi-123}
     \chi(t_1,t_2) = i \tr\left(O_1 (\i + \ii + \iii) \right).
\end{equation}

We first see that,
\begin{equation}\label{eq: term-I}
\begin{aligned}
\i = &    \tr_B\left(   U(t_1-t_2)  \big[ \rho^{(0)}_\text{tot}(t_2), O_2\otimes I_B\big]  U(t_1-t_2)^\dag \right) \\
=&  \tr_B\left(   U(t_1-t_2)   \big[ \rho_S(t_2), O_2\big] \otimes \rho_B  U(t_1-t_2 )^\dag \right) \\
=&   e^{(t_1 - t_2) \mathcal L_0} \big[ \rho_S(t_2), O_2\big]
  + \cM_2\bigl(e^{(t_1 - t_2) \mathcal L_0}  \big[\rho_S(t_2), O_2\big], t_1 -t_2 \big). 
\end{aligned}
\end{equation}
To arrive at the last line, we have used the second order cumulant expansion. In addition, we introduced $\mathcal L_0 = -i[H_S, \bullet]$.   \cref{eq: term-I} gives the first term in the main theorem. 

From \cref{TO2M2}, we can further use \eqref{TLQ} of $\cM_2$ and write the result as,
\[
\i =  e^{(t_1 - t_2) \mathcal L_0} \big[ \rho_S(t_2), O_2\big] + 
  \big( \cT e^{ \int_0^\tau \cL_C(t)dt } - e^{\tau \cL_0}  \big) \big[\rho_S(t_2), O_2\big].
\]

\bigskip 

Meanwhile, for the third term $\iii$, we have,
\begin{equation}
\begin{aligned}
& U(t_1 - t_2)\, \rho^{(2)}_{\text{tot}}(t_2)\, O_2 \otimes \mathbb{I}_B\, U^\dagger(t_1 - t_2) \\
  =&\ U_0(t_1 - t_2)\, \rho^{(2)}_{\text{tot}}(t_2)\, O_2 \otimes \mathbb{I}_B\, U_0^\dagger(t_1 - t_2) + \mathcal{O}(\lambda^3) \\
=&\ - \sum_{j,k} \int_0^{t_2} \int_0^{t_1'} 
S_j(t_1') S_k(t_2') \rho_S(0)\, O_2(t_1) \otimes B_j(t_1') B_k(t_2') \rho_B \, dt_2'\, dt_1' + \cdots + \mathcal{O}(\lambda^3),
\end{aligned}
\end{equation}
To arrive at the second line, we have used the fact that $\norm{\rho^{(2)}}=\mathcal{O}(\lambda^2)$ and $U=U_0 + \mathcal{O}(\lambda)$. On the third line, we only show the first term from $\rho^{(2)}_\text{tot}$ in \eqref{rho123}. 

Now, by taking the partial trace, we get
\begin{equation}
\begin{aligned}
   \iii= &   \tr_B\left(   U(t_1-t_2) [ \rho^{(2)}_\text{tot}(t_2), O_2]\otimes I_B  U(t_1-t_2)^\dag  \right) \\
   =  &  \left[- \sum_j \sum_k   \int_0^{t_1} \int_0^{t_1'}  \Big(    S_j (t_1' - t_2)  S_k(t_2' -t_2) \rho_S(t_1) C_{jk}(t_1' - t_2') + \cdots \Big) dt_2' dt_1',  O_2(t_2-t_1)\right]  + \mathcal{O}(\lambda^3) \\
   =  & e^{(t_1 - t_2) \mathcal L_0}   \left[ \cM_2\bigl( \rho_S(t_2), t_2\bigr),  O_2  \right]   + \mathcal{O}(\lambda^3) \\
   = &  e^{(t_1 - t_2) \mathcal L_0}   \left[ \bigl(\cT e^{ \int_0^{t_2} \cL_C(t)dt } - e^{t_2 \cL_0} \bigr)\rho_S(0),  O_2  \right] 
\end{aligned}
\end{equation}
Here we have recognized the double integral as the second order cumulant \eqref{eqn:cM2} and used  \cref{TO2M2}.

\bigskip 

It remains to estimate the contribution from \(\rho^{(1)}_{\text{tot}}\), corresponding to the second term $\ii$ in \cref{chi-123},
\[
\ii = \mathrm{Tr}_B\left( U(t_1 - t_2) \,  [\rho^{(1)}_{\text{tot}}(t_2), O_2 \otimes I_B] \, U^{\dagger}(t_1 - t_2) \right).
\]

\begin{equation}\label{rho11}
\begin{aligned}
    & U(t_2-t_1)  \rho^{(1)}_\text{tot}(t_1) O_1\otimes I_B  U(t_2-t_1)^\dag \\
     =&  U_0(t_2-t_1)  \rho^{(1)}_\text{tot}(t_1) O_1\otimes I_B  U_0(t_2-t_1)^\dag \\
      &- i \int_0^{t_2-t_1} \left[H_{SB}(t_1'-t_2+t_1), U_0(t_2-t_1) \rho^{(1)}_\text{tot}(t_1)O_1\otimes I_B  U_0(t_2-t_1)^\dag   \right] dt_1' 
\end{aligned}
\end{equation}

We have noticed that the first term on the right hand side has partial trace zero, which can be verified directly. For the second term,
recall that,
\begin{align*}
    \rho^{(1)}_\text{tot}(t) =& -i \int_0^t \big[H_{SB} (t_1 - t), \rho_S(t) \otimes \rho_B \big ] dt_1\\
     =&  -i \sum_k \int_0^t S_k (t_1 - t) \rho_S(t)  \otimes B_k(t_1 -t) \rho_B    - \rho_S(t) S_k (t_1 - t) \otimes \rho_B(t) B_k(t_1 -t)  dt_1.
\end{align*}

With a direct substitution, we obtain, the following expression for the term $\ii$:

\begin{equation}\label{ii-to-8-double-integrals}
\begin{aligned}
\ii=      &  \tr_B \left( U(t_1-t_2)  \left[ \rho^{(1)}_\text{tot}(t_2), O_2\otimes I_B\right]  U(t_1-t_2)^\dag \right)  \\
        &= - \sum_j \sum_k  \int_0^{\tau} \int_0^{t_2}  S_j( t_1'-\tau) S_k(t_2' -t_1) \rho_S(t_1)   O_2(-\tau)  C_{j,k}(t_1'-t_2' + t_2) dt_2'dt_1' \\
        &  + \sum_j \sum_k \int_0^{\tau} \int_0^{t_2} S_j( t_1'-\tau) \rho_S(t_1) S_k(t_2' -t_1)   O_2(-\tau) C_{j,k}(t_1'-t_2' + t_2)^\dag  dt_2'dt_1' \\ 
         &  + \sum_j \sum_k   \int_0^{\tau} \int_0^{t_2}  S_k(t_2' -t_1) \rho_S(t_1)   O_2(-\tau)  S_j( t_1'-\tau) C_{j,k}(t_1'-t_2' + t_2)  dt_2'dt_1' \\ 
         &- \sum_j \sum_k  \int_0^{\tau} \int_0^{t_2}    \rho_S(t_1)S_k(t_2' -t_1)  O_2(-\tau) S_j( t_1'-t_1+t_2)  C_{j,k}(t_1'-t_2' + t_2)^\dag  dt_2'dt_1'\\
         & + \sum_j \sum_k  \int_0^{\tau} \int_0^{t_2}  S_j( t_1'-\tau)  O_2(-\tau) S_k(t_2' -t_1) \rho_S(t_1)    C_{j,k}(t_1'-t_2' + t_2) dt_2'dt_1' \\
        &  - \sum_j \sum_k \int_0^{\tau} \int_0^{t_2} S_j( t_1'-\tau) O_2(-\tau) \rho_S(t_1) S_k(t_2' -t_1)    C_{j,k}(t_1'-t_2' + t_2)^\dag  dt_2'dt_1' \\ 
         &  - \sum_j \sum_k   \int_0^{\tau} \int_0^{t_2}  O_2(-\tau) S_k(t_2' -t_1) \rho_S(t_1)     S_j( t_1'-\tau) C_{j,k}(t_1'-t_2' + t_2)  dt_2'dt_1' \\ 
         &+ \sum_j \sum_k  \int_0^{\tau} \int_0^{t_2}  O_2(-\tau)  \rho_S(t_1)S_k(t_2' -t_1)   S_j( t_1'-t_1+t_2)  C_{j,k}(t_1'-t_2' + t_2)^\dag  dt_2'dt_1'.\\
\end{aligned}
\end{equation}

To simplify these terms to recognizable forms, we express the two-point bath correlation function \eqref{bcf} in a spectral form,
\begin{equation}
    C_{j,k}(t) = \sum_{\mu} g_{j,\mu} g_{k,\mu}^* e^{-it\omega_\mu}. 
\end{equation}
Here $\omega_\mu$s are the bath frequencies, and $g_{j,\mu}$ is related to the spectral density.  

To elaborate on how the integrals in \eqref{ii-to-8-double-integrals} are treated, 
we write
\[
  \ii = \sum_j \sum_k  \int_0^{\tau} \int_0^{t_2}  Q_1 + Q_2 + \cdots + Q_8  dt_2'dt_1'
\]
It is enough to consider the first integral, denoted by  $Q_1$. Other terms are simply a rearrangement of $Q_1$. 
\[
\begin{aligned}
  Q_1=&  \sum_\mu  \int_0^{\tau} \int_0^{t_2}  T_\mu(t_1' - \tau) T_\mu(t_2'- t_1) \rho_S(t_1) O_2(-\tau) e^{-i(t_1'-t_2' + t_2)\omega_\mu} dt_2'dt_1' \\
  =&   \sum_\mu   \int_0^{\tau} U(\tau-t_1') \int_0^{t_2}  T_\mu T_\mu(-t_1'+t_2'- t_2) \rho_S(t_2+t_1') O_2(-t_1')  e^{-i(t_1'-t_2' + t_2)\omega_\mu} dt_2' U(\tau-t_1')^\dag dt_1'  
\end{aligned}
\]
Here, to incorporate the spectral property, we have defined,
\begin{equation}\label{S2T}
    T_\mu = \sum_j S_j g_{j,\mu}.
\end{equation}

We now let $\{V_j\}$ be a fixed basis in $\mathcal H_S$ and we expand
\begin{equation}\label{T2V}
    T_\mu = \sum_m y_{\mu,j}(0) V_j, \quad     T_\mu(t) e^{-\omega_m t} = \sum_m y_{\mu,j}(t) V_j.
\end{equation}
As a result, we can simplify the above equation to 
\begin{equation}
\begin{aligned}
        Q_1 =& \sum_{j,k}  \int_0^{\tau} U(\tau-t_1')  d_{j,k}(t_1') V_j V_k^\dag \rho_S(t_2+t_1') O_2(-t_1') U(\tau-t_1')^\dag dt_1'  \\
         =& \sum_{j,k}  \int_0^{\tau} U(\tau-t_1')  d_{j,k}(t_1') V_j(t_1-\tau) V_k (t_1-\tau)^\dag \rho_S(t_1) O_2(-\tau)  dt_1'.   
\end{aligned}
\end{equation}
Here the coefficient matrix $D=(d_{j,k})$ is computed from the integrals of the  expansion coefficients in \eqref{T2V},
\begin{equation}\label{djk}
     d_{j,k}(t_1') = \sum_\mu  \int_0^{t_2}   y_{\mu,j}(0) y_{\mu,j}^*(-t_2'-t_1') dt_2'.
\end{equation}

This is beginning to resemble part of the solution of the time-local quantum master equation \eqref{TLQ} in Duhamel's form \eqref{TO2M2}.  
However, the coefficients here may not be associated with a Hermitian matrix. 
To simplify these formulas, we need the following calculation. We first define GKLS generator for a Hermitian matrix $A$, its adjoint, and the Lamb shift,
\begin{equation}
    \begin{aligned}
\mathcal{L}_A(X) &= \sum_{j,k} a_{jk}\bigl(V_k X V_j^{\dagger} -\frac12  V_j^{\dagger}V_k X - \frac12 X V_j^{\dagger}V_k\bigr),\\
\mathcal{L}_A^{\dagger}(X) &= \sum_{j,k} a_{jk}\bigl(V_j X V_k^{\dagger} - \frac12 X V_j V_k^{\dagger} - \frac12 V_j V_k^{\dagger} X\bigr),\\
H_A&=\sum_{j,k} a_{jk}V_j V_k^{\dagger}.
\end{aligned}
\end{equation}

We consider the following 8 terms, representing the simplified terms in \eqref{ii-to-8-double-integrals}, after the transformations \cref{S2T,T2V,djk}, but without showing the integrals and time dependence. 
\begin{equation}
    \begin{aligned}
Q= & -\sum_{j,k} d_{jk}\,V_j V_k^{\dagger}\,\rho O, 
 +\sum_{j,k} d_{jk}^*\,V_j^{\dagger}\rho V_k O, 
  +\sum_{j,k} d_{jk}\,V_k^{\dagger}\rho O V_j, 
 -\sum_{j,k} d_{jk}^*\,\rho V_k O V_j^{\dagger}, \\[4pt]
 &  +\sum_{j,k} d_{jk}\,V_j O V_k^{\dagger}\rho,
 -\sum_{j,k} d_{jk}^*\,V_j^{\dagger}O\rho V_k, 
  -\sum_{j,k} d_{jk}\,O V_k^{\dagger}\rho V_j, 
 +\sum_{j,k} d_{jk}^*\,O\rho V_k V_j^{\dagger}.
 \end{aligned}
\end{equation}

After lengthy calculations, we derived the following identity, 
\begin{equation}
    \begin{aligned}
Q=& 
  \,\mathcal{L}_{A}\!\bigl([\rho,O]\bigr)
+[\mathcal{L}_{A}\rho,O]
-[\rho,\mathcal{L}_{A}^{\dagger}O]
+i\mathcal{L}_{B}\bigl(\{\rho,O\}\bigr) 
  -i\{\mathcal{L}_{B}\rho,O\} 
  +i\{\rho,\mathcal{L}_{B}^{\dagger}O\}  \\  
&-{i}\bigl([[H_B,\rho],O] + [\rho,[H_B,O]]\bigr).
\end{aligned}
\end{equation}
Throughout, $D=(d_{jk})$ with $d_{jk}=a_{jk}+\mathrm{i}b_{jk}$, where $A=(a_{jk})$ and $B=(b_{jk})$ are Hermitian.

When each term is integrated from $0$ to $\tau$, it can be regarded as a perturbation term in \eqref{TO2M2}, and therefore, can be expressed as the difference between the time-ordered evolution of a time-local quantum master equation and a free evolution. 
We summarize the formula here. 
\begin{equation}
\begin{aligned}
       \ii &= \big( \cT e^{ \int_0^\tau \cL_0 + \cL_A(t)dt } -   e^{ \tau \cL_0} \big) [\rho_S(t_2), O_2] + \left[ \cT \big( e^{ \int_0^\tau \cL_0 + \cL_A(t)dt} -   e^{ \tau \cL_0} \big) \rho_S(t_2), O_2\right] \\ 
        - & \left[ \rho_S(t_2), \big( \cT e^{ \int_0^\tau \cL_0 + \cL_A(t)dt} -   e^{ \tau \cL_0} \big)^\dag O_2 \right] 
      +  i \big( \cT e^{ \int_0^\tau \cL_0 + \cL_B(t)dt} -   e^{ \tau \cL_0} \big) \{\rho_S(t_2), O_2\} \\
      -& i \left\{ \big( \cT e^{ \int_0^\tau \cL_0 + \cL_B(t)dt} -   e^{ \tau \cL_0} \big) \rho_S(t_2), O_2\right\}
        +i \left\{ \rho, \cT \big( e^{ \int_0^\tau \cL_0 + \cL_B(t)dt} -   e^{ \tau \cL_0} \big)^\dag O_2 \right\} \\
        +& \left[ \big( \cT e^{ \int_0^\tau \cL_0 + L_{H_B(t)}dt} -   e^{ \tau \cL_0} \big) \rho_S(t_2),    O_2\right]
        - \left[\rho_S(t_2),  \big(\cT e^{ \int_0^\tau \cL_0 + L_{H_B(t)} dt} -   e^{ \tau \cL_0} \big) O_2\right]
 \end{aligned}
\end{equation}


\begin{thebibliography}{10}

\bibitem{ali2015non}
Md~Manirul Ali, Ping-Yuan Lo, Matisse Wei-Yuan Tu, and Wei-Min Zhang.
\newblock Non-markovianity measure using two-time correlation functions.
\newblock {\em Physical Review A}, 92(6):062306, 2015.

\bibitem{ban2017double}
Masashi Ban.
\newblock Double-time correlation functions of two quantum operations in open
  systems.
\newblock {\em Physical Review A}, 96(4):042111, 2017.

\bibitem{ban2019two}
Masashi Ban.
\newblock Two-time correlations functions and quantumness of an open two-level
  system.
\newblock {\em The European Physical Journal D}, 73(1):12, 2019.

\bibitem{ban2017linear}
Masashi Ban, Sachiko Kitajima, Toshihico Arimitsu, and Fumiaki Shibata.
\newblock Linear response theory for open systems: Quantum master equation
  approach.
\newblock {\em Physical Review A}, 95(2):022126, 2017.

\bibitem{ban2018two}
Masashi Ban, Sachiko Kitajima, and Fumiaki Shibata.
\newblock Two-time correlation function of an open quantum system in contact
  with a gaussian reservoir.
\newblock {\em Physical Review A}, 97(5):052101, 2018.

\bibitem{breuer2002theory}
Heinz-Peter Breuer and Francesco Petruccione.
\newblock {\em {The Theory of Open Quantum Systems}}.
\newblock Oxford University Press, 2007.

\bibitem{carmichael2013statistical}
Howard~J Carmichael.
\newblock {\em Statistical methods in quantum optics 1: master equations and
  Fokker-Planck equations}.
\newblock Springer Science \& Business Media, 2013.

\bibitem{CL17}
Andrew~M Childs and Tongyang Li.
\newblock Efficient simulation of sparse {M}arkovian quantum dynamics.
\newblock {\em Quantum Information \& Computation}, 17(11\&12):0901--0947,
  2017.

\bibitem{fetter2012quantum}
Alexander~L Fetter and John~Dirk Walecka.
\newblock {\em Quantum theory of many-particle systems}.
\newblock Courier Corporation, 2012.

\bibitem{gardiner2004quantum}
C.~W. Gardiner and Peter Zoller.
\newblock {\em Quantum Noise: A Handbook of Markovian and Non-Markovian Quantum
  Stochastic Methods with Applications to Quantum Optics}.
\newblock Springer Series in Synergetics. Springer, Berlin Heidelberg, 3rd
  edition, 2004.

\bibitem{GLSW19}
Andr{\'a}s Gily{\'e}n, Yuan Su, Guang~Hao Low, and Nathan Wiebe.
\newblock Quantum singular value transformation and beyond: exponential
  improvements for quantum matrix arithmetics.
\newblock In {\em Proceedings of the 51st annual ACM SIGACT symposium on theory
  of computing (STOC 2019)}, pages 193--204, 2019.

\bibitem{goan2011non}
Hsi-Sheng Goan, Po-Wen Chen, and Chung-Chin Jian.
\newblock Non-markovian finite-temperature two-time correlation functions of
  system operators: beyond the quantum regression theorem.
\newblock {\em The Journal of chemical physics}, 134(12), 2011.

\bibitem{goan2010non}
Hsi-Sheng Goan, Chung-Chin Jian, and Po-Wen Chen.
\newblock Non-markovian finite-temperature two-time correlation functions of
  system operators of a pure-dephasing model.
\newblock {\em Physical Review A—Atomic, Molecular, and Optical Physics},
  82(1):012111, 2010.

\bibitem{hall2014canonical}
Michael~JW Hall, James~D Cresser, Li~Li, and Erika Andersson.
\newblock Canonical form of master equations and characterization of
  non-markovianity.
\newblock {\em Physical Review A}, 89(4):042120, 2014.

\bibitem{HLL+24}
Wenhao He, Tongyang Li, Xiantao Li, Zecheng Li, Chunhao Wang, and Ke~Wang.
\newblock {Efficient Optimal Control of Open Quantum Systems}.
\newblock In {\em Proceedings of the 19th Conference on the Theory of Quantum
  Computation, Communication and Cryptography (TQC 2024)}, volume 310, pages
  3:1--3:23, 2024.

\bibitem{huang2024unified}
Zhen Huang, Lin Lin, Gunhee Park, and Yuanran Zhu.
\newblock Unified analysis of non-markovian open quantum systems in gaussian
  environment using superoperator formalism.
\newblock {\em arXiv preprint arXiv:2411.08741}, 2024.

\bibitem{jin2016non}
Jinshuang Jin, Christian Karlewski, and Michael Marthaler.
\newblock Non-markovian correlation functions for open quantum systems.
\newblock {\em New Journal of Physics}, 18(8):083038, 2016.

\bibitem{kokcu2023linear}
Efekan K{\"o}kc{\"u}, Heba~A Labib, JK~Freericks, and Alexander~F Kemper.
\newblock A linear response framework for simulating bosonic and fermionic
  correlation functions illustrated on quantum computers.
\newblock {\em arXiv preprint arXiv:2302.10219}, 2023.

\bibitem{kubo1957}
Ryogo Kubo.
\newblock Statistical-mechanical theory of irreversible processes. i. general
  theory and simple applications to magnetic and conduction problems.
\newblock {\em Journal of the Physical Society of Japan}, 12(6):570--586, 1957.

\bibitem{kurt2020non}
Arzu Kurt.
\newblock Non-markovian corrections to quantum regression theorem for the
  strong coupling spin-boson model.
\newblock {\em Sakarya University Journal of Science}, 24(4):596--604, 2020.

\bibitem{lax1963}
Melvin Lax.
\newblock Formal theory of quantum fluctuations from a driven state.
\newblock {\em Physical Review}, 129(5):2342--2348, 1963.

\bibitem{pedernales2014efficient}
JS~Pedernales, R~Di~Candia, IL~Egusquiza, J~Casanova, and Enrique Solano.
\newblock Efficient quantum algorithm for computing n-time correlation
  functions.
\newblock {\em Physical Review Letters}, 113(2):020505, 2014.

\bibitem{Rall20}
Patrick Rall.
\newblock Quantum algorithms for estimating physical quantities using block
  encodings.
\newblock {\em Physical Review A}, 102(2):022408, 2020.

\bibitem{tamascelli2018nonperturbative}
Dario Tamascelli, Andrea Smirne, Susana~F Huelga, and Martin~B Plenio.
\newblock Nonperturbative treatment of non-markovian dynamics of open quantum
  systems.
\newblock {\em Physical review letters}, 120(3):030402, 2018.

\end{thebibliography}
\end{document}